\documentclass[copyright,creativecommons,noderivs,noncommercial]{eptcs}



\usepackage{amsmath,amssymb}
\usepackage{graphicx}
\usepackage{theorem}



\makeatletter  

\newcommand{\mathify}[1]{\ifmmode{#1}\else\mbox{$#1$}\fi}
\def\mathy[[#1]]{\mathify{#1}} 
\def\.#1{\mathify{#1}}

\newcommand{\ie}{{\em i.e.}}

\newcommand{\eg}{{\em e.g.}}

\let\TT=\tt



\renewcommand{\b}[1]{{\bf #1}}           

\newcommand{\e}[1]{{\em #1\/}}

\newcommand{\s}[1]{{\sf #1}}


\newcommand{\eqnref}[1]{(\ref{#1})}



\let\phi=\varphi




\def\athenasymbol
  {\hbadness=10000000 \hfuzz = 10000pt
   \hbox{\raise 4pt\hbox to -3.1pt {{$\bigtriangleup$}}
   \lower3pt\hbox{{\rm +}}
   }}

\newenvironment{proof}{{\bf Proof:}}{\mbox{}\unskip~~\hfill$\Box$\medskip}



\newcommand{\bigger}[3]{\setbox0=\hbox{$#3$}\ht0=1.05\ht0\mathify{#1\box0#2}}


\def\Biggerch#1#2#3#4#5{{
  \setbox0=\hbox{$#1#4$}\ht0=#5\ht0\dp0=#5\dp0\mathify{#2\box0#3}}
}

\def\Biggersty#1#2#3#4{
  \mathify{\mathchoice
   {\Biggerch{\displaystyle}{#1}{#2}{#3}{#4}}
   {\Biggerch{\textstyle}{#1}{#2}{#3}{#4}}
   {\Biggerch{\scriptstyle}{#1}{#2}{#3}{#4}}
   {\Biggerch{\scriptscriptstyle}{#1}{#2}{#3}{#4}}}
}

\def\.#1{\ifmmode%
\Biggersty{\left(}{\right)}{#1}{1.00}%
\else(#1)\fi}

\def\set#1{\bigger{\left\{}{\right\}}{#1}}
\def\parens#1{\bigger{\left(}{\right)}{#1}}


\newcommand{\Or}{\vee}

\newcommand{\OR}{\bigvee}

\newcommand{\Implies}{\mathrel{\Rightarrow}}

\newcommand{\Iff}{\mathify{\mathrel{\Leftrightarrow}}}

\newcommand{\Union}{\mathop{\bigcup}}
\newcommand{\union}{\mathbin{\cup}}
\newcommand{\intersect}{\mathbin{\cap}}


\renewcommand{\emptyset}{\mathify{\varnothing}}


\def\StackArray#1{\begin{array}{c} #1 \end{array}}


\def\clap#1#2{
  {
   \setbox0=\hbox{\mathify{#1}}
   \setbox1=\hbox{\mathify{#2}}
   \ifdim\wd0>\wd1
      \setbox2=\box0
      \setbox0=\box1
      \setbox1=\box2
      \fi
   \dimen0=\wd1
   \advance\dimen0 by -\wd0
   \divide\dimen0 by 2
   \dimen1=-\wd0
   \advance\dimen1 by -\dimen0
   \hskip\dimen0\box0\hskip \dimen1
   \box1
    }
  }





\newcommand{\mv}[1]{\mathrel{\stackrel{#1}{\rightarrow}}}

\newcommand{\nv}[1]{\mathrel{\stackrel{#1}{\nrightarrow}}}


\newcommand{\fix}[2]{\mathify{\r{rec} \left[#1 \Leftarrow #2\right]}}
\def\rec#1.#2{\fix{#1}{#2}}
\newcommand{\nil}{\mbox{\bf 0}}

\def\bisim{\mathrel{\underline{\leftrightarrow}}}

\newcommand{\sat}{\mathrel{\models}}


\def\mathbox#1{\hbox{$#1$}}

\def\rsap{\mathrel{\lower4pt\mathbox{\stackrel{\textstyle\sqsubset}{
            \raise1pt\mathbox{\scriptstyle\rightharpoondown}}}}}

\def\rsge{\mathrel{\lower4pt\mathbox{\stackrel{\textstyle\sqsupset}{
            \raise1pt\mathbox{\scriptstyle\leftharpoondown}}}}}

\newif\ifttordinaire\ttordinairefalse

\def\ttt{\rlap{\rm t}\kern .5ex{\rm t}}
\renewcommand{\tt}{\ifttordinaire\TT\else\ifmmode\ttt\else\TT\fi\fi}
\newcommand{\ff}{\rlap{\rm f}\kern .5ex{\rm f}}



\newcommand{\Rule}[2] 
{\[
\begin{array}{c}
\StackArray{#1} \\\hline \StackArray{#2}
\end{array}
\]
}

\newcommand{\RuleLab}[3]{
\begin{equation}
\label{#3}
\begin{array}{c}
#1 \\\hline #2
\end{array}
\end{equation}
}

\newcommand{\TwoRules}[4]{
\begin{displaymath}
\begin{array}{c}
#1 \\\hline 
#2
\end{array}
\qquad
\begin{array}{c}
#3 \\\hline 
#4
\end{array}
\end{displaymath}
}

\newcommand{\TwoRulesLab}[5]{
\begin{equation}
\begin{array}{c}
#1 \\\hline 
#2
\end{array}
\qquad
\begin{array}{c}
#3 \\\hline 
#4
\end{array}
\label{#5}
\end{equation}
}

\newcommand{\TwoRulesLabFullStop}[5]{
\begin{equation}
\begin{array}{c}
#1 \\\hline 
#2
\end{array}
\qquad
\begin{array}{c}
#3 \\\hline 
#4
\end{array}
\label{#5}
\enspace .
\end{equation}
}

\newcommand{\ThreeRules}[6]{
\begin{displaymath}
\begin{array}{c}
#1 \\\hline 
#2
\end{array}
\qquad
\begin{array}{c}
#3 \\\hline 
#4
\end{array}
\qquad
\begin{array}{c}
#5 \\\hline 
#6
\end{array}
\end{displaymath}
}

\newcommand{\ThreeRulesLab}[7]{
\begin{equation}
\begin{array}{c}
#1 \\\hline 
#2
\end{array}
\qquad
\begin{array}{c}
#3 \\\hline 
#4
\end{array}
\qquad
\begin{array}{c}
#5 \\\hline 
#6
\end{array}
\label{#7}
\end{equation}
}

\newcommand{\FourRulesFullStop}[8]{
\begin{displaymath}
\begin{array}{c}
#1 \\\hline
#2
\end{array}
\qquad
\begin{array}{c}
#3 \\\hline
#4
\end{array}
\qquad
\begin{array}{c}
#5 \\\hline
#6
\end{array}
\qquad
\begin{array}{c}
#7 \\\hline
#8
\end{array}
\enspace .
\end{displaymath}
}







\newcommand{\Act}{{\sf Act}}
\newcommand{\Var}{\s{Var}}




\def\ui/{\"\i}  

\def\Quantify#1#2{\mathify{#1 #2 \b{.}\,}}
\def\Forall#1.{\Quantify{\forall}{#1}}
\def\Exists#1.{\Quantify{\exists}{#1}}
\def\Nexists#1.{\Quantify{\nexists}{#1}}

\makeatother 

\newcommand{\cali}{\mathify{\cal I}}


\newcommand{\xvec}{\mathify{\vec{x}}}
\newcommand{\yvec}{\mathify{\vec{y}}}

\newcommand{\Pvec}{\mathify{\vec{P}}}
\newcommand{\Qvec}{\mathify{\vec{Q}}}

\let\oldalpha=\alpha
\let\oldbeta=\beta
\let\oldgamma=\gamma
\let\olddelta=\delta
\let\oldepsilon=\epsilon
\let\oldzeta=\zeta
\let\oldeta=\eta
\let\oldtheta=\theta
\let\oldiota=\iota
\let\oldkappa=\kappa
\let\oldmu=\mu
\let\oldnu=\nu
\let\oldpi=\pi
\let\oldxi=\xi
\let\oldvartheta=\vartheta
\let\oldlambda=\lambda
\let\oldrho=\rho
\let\oldsigma=\sigma
\let\oldtau=\tau
\let\oldupsilon=\upsilon
\let\oldphi=\phi
\let\oldchi=\chi
\let\oldpsi=\psi
\let\oldomega=\omega
\let\oldGamma=\Gamma
\let\oldDelta=\Delta
\let\oldTheta=\Theta
\let\oldLambda=\Lambda
\let\oldXi=\Xi
\let\oldPi=\Pi
\let\oldSigma=\Sigma
\let\oldUpsilon=\Upsilon
\let\oldPhi=\Phi
\let\oldPsi=\Psi
\let\oldOmega=\Omega

\def\alpha{\mathify{\oldalpha}}
\def\beta{\mathify{\oldbeta}}
\def\gamma{\mathify{\oldgamma}}
\def\delta{\mathify{\olddelta}}
\def\epsilon{\mathify{\oldepsilon}}
\def\zeta{\mathify{\oldzeta}}
\def\eta{\mathify{\oldeta}}
\def\theta{\mathify{\oldtheta}}
\def\iota{\mathify{\oldiota}}
\def\kappa{\mathify{\oldkappa}}
\def\mu{\mathify{\oldmu}}
\def\nu{\mathify{\oldnu}}
\def\pi{\mathify{\oldpi}}
\def\xi{\mathify{\oldxi}}
\def\vartheta{\mathify{\oldvartheta}}
\def\lambda{\mathify{\oldlambda}}
\def\rho{\mathify{\oldrho}}
\def\sigma{\mathify{\oldsigma}}
\def\tau{\mathify{\oldtau}}
\def\upsilon{\mathify{\oldupsilon}}
\def\phi{\mathify{\oldphi}}
\def\chi{\mathify{\oldchi}}
\def\psi{\mathify{\oldpsi}}
\def\omega{\mathify{\oldomega}}
\def\Gamma{\mathify{\oldGamma}}
\def\Delta{\mathify{\oldDelta}}
\def\Theta{\mathify{\oldTheta}}
\def\Lambda{\mathify{\oldLambda}}
\def\Xi{\mathify{\oldXi}}
\def\Pi{\mathify{\oldPi}}
\def\Sigma{\mathify{\oldSigma}}
\def\Upsilon{\mathify{\oldUpsilon}}
\def\Phi{\mathify{\oldPhi}}
\def\Psi{\mathify{\oldPsi}}
\def\Omega{\mathify{\oldOmega}}

\newtheorem{defi}{Definition}[section]
\newtheorem{theo}{Theorem}[section]
\newtheorem{prop}{Proposition}[section]
\newtheorem{lemm}{Lemma}[section]
\newtheorem{coro}{Corollary}[section]
\newtheorem{exam}{Example}[section]
\newtheorem{fatto}{Fact}[section]
\newtheorem{rmk}{Remark}[section]
\newtheorem{clm}{Claim}
\newtheorem{asz}{Simplifying Assumption}

\newenvironment{remark}{\begin{rmk} \rm  }{\end{rmk}}

\newenvironment{definition}{\begin{defi} \rm  }{\end{defi}}
\newenvironment{theorem}{\begin{theo} \rm  }{\end{theo}}

\newenvironment{lemma}{\begin{lemm} \rm  }{\end{lemm}}

\newenvironment{example}{\begin{exam} \rm  }{\end{exam}}



\def\imp{\Rightarrow}

\def\ante#1{\mathy[[\s{ante}\parens{#1}]]}
\def\cons#1{\mathy[[\s{cons}\parens{#1}]]}

\def\vars#1{\mathy[[\s{vars}\parens{#1}]]}
\def\ruloids#1{\mathy[[R_G\parens{#1}]]}

\def\hyps#1{\s{hyps}( #1 )}
\def\interleave{\mathbin{\clap{|}{\|}}}

\newcommand{\diam}[1]{\langle #1 \rangle}
\newcommand{\support}[1]{\mbox{\sf Supp}(#1)}
\newcommand{\triv}{\mbox{TRIV}}

\newcommand{\clock}{\Omega_\Act}

\newcommand{\deq}{\mathrel{\stackrel{\scriptscriptstyle\Delta}{=}}}
\newcommand{\mathdef}[1]{\relax\ifmmode #1\else $#1$\fi}
\newcommand{\proofrule}[2]{\setlength{\arrayrulewidth}{.6pt}%
  {\mathdef{\begin{array}{@{}c@{}}%
	 \begin{array}{@{}l@{}} #1\raisebox{-.1em}{\strut}\end{array}\\
		\hline \raisebox{.1em}{\strut}#2\end{array}}}}

\renewcommand{\bisim}{\mathrel{\underline{\leftrightarrow}}}

\newcommand{\OpenT}{\setlength{\unitlength}{1ex}
		\begin{picture}(1.25,1.45)
	       \put(0.4,0){\line(0,1){1.45}}
	  \put(0.85,0){\line(0,1){1.45}}
    \put(-0.1,1.5){\line(1,0){1.45}}
	\put(0,1.45){\line(1,0){1.25}}
	   \put(0.4,0){\line(1,0){0.45}}
	      \end{picture}}
\newcommand{\terms}[1]{\OpenT ( #1 )}
\newcommand{\closed}[1]{{\rm T}( #1 )}
\def\syneq{\equiv}
\newcommand{\Algbis}[1]{\s{Bisim}\parens{#1}}
\newcommand{\Classbis}[1]{\s{BISIM}\parens{#1}}

\newcommand{\subst}[2]{#1 #2}
\newcommand{\init}[1]{\mbox{\em init}(#1)}

\def\obisim{\bisim^{\scriptstyle {RM}}}
\def\sourcevars#1{\mathy[[\s{SV}(#1)]]}
\def\targetvars#1{\mathy[[\s{TV}(#1)]]}

\newcommand{\T}{\mbox{\sf True}}
\newcommand{\F}{\mbox{\sf False}}

\title{A Bisimulation-based Method for Proving the Validity of
Equations in GSOS Languages\thanks{The work of the authors has been
partially supported by the projects ``The Equational Logic of Parallel
Processes'' (nr.~060013021) and ``New Developments in Operational
Semantics'' (nr.~080039021) of the Icelandic Research Fund. The first
author dedicates this article to the memory of his mother, Imelde
Diomede Aceto, who passed away on June 26, 2008.}}

\author{Luca Aceto \quad
Matteo Cimini \quad
Anna Ingolfsdottir
\institute{School of Computer Science, Reykjavik University,
\\ Kringlan 1, IS-103 Reykjavik, Iceland} 
\email{$\{$luca, matteo, annai$\}$@ru.is}
}

\begin{document}
\maketitle

\begin{abstract}
This paper presents a bisimulation-based method for establishing the
soundness of equations between terms constructed using operations
whose semantics is specified by rules in the GSOS format of Bloom,
Istrail and Meyer. The method is inspired by de Simone's
FH-bisimilarity and uses transition rules as schematic transitions in
a bisimulation-like relation between open terms. The soundness of the
method is proven and examples showing its applicability are
provided. The proposed bisimulation-based proof method is incomplete,
but the article offers some completeness results for restricted
classes of GSOS specifications.
\end{abstract}

\section{Introduction}\label{Sect:intro}

Equations play a fundamental role in the development of the theory and
practice of process calculi and programming languages since they offer
a mathematically appealing and concise way of stating the `laws of
programming' (to borrow the title of a paper by Hoare et
al.~\cite{HH87}) that apply to the language at hand. In the setting of
process calculi, the study of equational axiomatizations of
behavioural relations has been a classic area of investigation since,
\eg, the early work of Hennessy and Milner~\cite{HM85,Mi84}, who
offered complete axiom systems for bisimilarity~\cite{Pa81} over the
finite and regular fragments of Milner's CCS~\cite{Mi89}. Such
axiomatizations capture the essence of bisimilarity over those
fragments of CCS in a syntactic, and often revealing, way and
potentially pave the way for the verification of equivalences between
processes by means of theorem proving techniques. Despite these early
achievements, the search for axiomatizations of process equivalences
that are powerful enough to establish all the valid equations between
open process terms---that is, terms possibly containing
variables---has proven to be a very difficult research problem;
see~\cite{AcetoFIL05} for a survey of results in this area. For
instance, to the best of our knowledge, there is no known
axiomatization of bisimilarity over recursion-free CCS that is
complete over open terms. Stepping stones towards such a result are
offered in, \eg,~\cite{AcetoFIL09,AcetoILT08}.

The most basic property of any equation is that it be {\em sound} with
respect to the chosen notion of semantics. Soundness proofs are often
lengthy, work-intensive and need to be carried out for many equations
and languages. It is therefore not surprising that the development of
general methods for proving equivalences between open terms in
expressive process calculi has received some attention since the early
developments of the algebraic theory of processes---see, \eg, the
references~\cite{BruniFMM00,LarsenX1991,Rensink00,dS85,Weerdenburg2008}
for some of the work in this area over a period of over 20 years. This
article offers a contribution to this line of research by developing a
bisimulation-based method, which we call {\em rule-matching
bisimilarity}, for establishing the soundness of equations between
terms constructed using operations whose semantics is specified by
rules in the GSOS format of Bloom, Istrail and
Meyer~\cite{BloomIM95}. Rule-matching bisimilarity is inspired by de
Simone's FH-bisimilarity~\cite{dS85} and uses transition rules as
transition schemas in a bisimulation-like relation between open
terms. We prove that rule-matching bisimilarity is a sound proof
method for showing the validity of equations with respect to
bisimilarity and exhibit examples witnessing its incompleteness.

The incompleteness of rule-matching bisimilarity is not unexpected and
raises the question whether the method is powerful enough to prove the
soundness of `interesting' equations. In order to offer a partial
answer to this question, we provide examples showing the applicability
of our proof method. In particular, our method does not only apply to
a more expressive rule format than the one proposed by de Simone
in~\cite{dS85}, but is also a sharpening of de Simone's
FH-bisimilarity over de Simone languages. See
Section~\ref{Sect:examples}, where we apply rule-matching bisimilarity
to prove the soundness of the equations in de Simone's `clock
example'. (This example was discussed by de Simone in~\cite{dS85} to
highlight the incompleteness of FH-bisimilarity.) On the theoretical
side, we also offer some completeness results for restricted classes
of GSOS specifications.

Overall, we believe that, while our conditions are neither necessary
nor in general can they be checked algorithmically, they frequently
hold, and they are more accessible to machine support than a direct
proof of soundness.

The paper is organized as follows. Sections~\ref{Sect:preliminaries}
and~\ref{Sect:ruloids} introduce the necessary preliminaries on the
GSOS rule format that are needed in the reminder of the paper. In
particular, Section~\ref{Sect:ruloids} recalls the notion of ruloid,
which plays a key role in the technical developments to follow. In
Section~\ref{transition-logic}, we introduce a simple logic of
transition formulae and establish a decidability result for the
validity of implications between formulae. Implication between certain
kinds of transition formulae that are naturally associated with the
premises of (sets of) ruloids is used in the definition of
rule-matching bisimilarity in Section~\ref{Sect:rule-match}. In that
section, we prove that rule-matching bisimilarity is a sound method
for showing the validity of equations in GSOS languages modulo
bisimilarity and exhibit examples witnessing its incompleteness. We
apply rule-matching bisimilarity to show the validity of some sample
equations from the literature on process algebra in
Section~\ref{Sect:examples}. We then offer some partial completeness
results for rule-matching bisimilarity
(Section~\ref{Sect:partial-completeness}). The paper concludes with a
discussion of related and future work
(Section~\ref{Sect:related-work}).

\section{Preliminaries}
\label{Sect:preliminaries}

We assume familiarity with the basic notation of process algebra and
structural operational semantics; see \eg{}
\cite{AcetoFV2001,BW90,BloomIM95,GrV92,He88a,Ho85,Mi89,MousaviRG07,Plotkin04a}
for more details.

Let $\Var$ be a countably infinite set of {\em process variables} with
typical elements $x,y$.  A {\em signature} $\Sigma$ consists of a set
of {\em operation symbols}, disjoint from $\Var$, together with a
function {\em arity} that assigns a natural number to each operation
symbol.  The set $\terms{\Sigma}$ of {\em terms} built from the
operations in $\Sigma$ and the variables in $\Var$ is defined in the
standard way.  We use $P,Q,\ldots$ to range over terms and the symbol
$\syneq$ for the relation of syntactic equality on terms.  We denote
by $\closed{\Sigma}$ the set of \e{closed} terms over $\Sigma$, \ie,
terms that do not contain variables, and will use $p,q,\ldots$ to
range over it. An operation symbol $f$ of arity $0$ will be often
called a \e{constant} symbol, and the term $f()$ will be abbreviated
as $f$.

Besides terms we have \e{actions}, elements of some given nonempty, finite
set {$\Act$}, which is ranged over by  {$a,b,c,d$}.
A \e{positive transition formula} is a triple of two terms and an action,
written {$P \mv{a} P'$}.  A \e{negative transition formula} is a pair of a
term and an action, written {$P \nv{a}$}.

A {\em (closed) $\Sigma$-substitution} is a function $\sigma$ from
variables to (closed) terms over the signature $\Sigma$.  For $t$ a
term or a transition formula, we write $\subst{t}{\sigma}$ for the
result of substituting $\sigma (x)$ for each $x$ occurring in $t$, and
$\vars{t}$ for the set of variables occurring in $t$. A
\e{{$\Sigma$}-context} {$C[\xvec]$} is a term in which at most the
variables {$\xvec$} appear.  {$C[\Pvec]$} is {$C[\xvec]$} with
{$x_{i}$} replaced by {$P_{i}$} wherever it occurs.
 
\begin{definition}[GSOS Rule]
\label{gsos-rule-defn}
Suppose $\Sigma$ is a signature.
A \e{GSOS rule} {$\rho$} over $\Sigma$ is a rule of the form:
\RuleLab{
  \Union_{i=1}^{l} \set{x_{i} \mv{a_{ij}} y_{ij} | 1 \leq j \leq m_{i}}
  \setbox0=\mathbox{\set{x_{i} \nv{b_{ik}} | 1 \leq k \leq n_{i}}}
  \dp0=1.4\dp0
  ~~ \union ~~ \Union_{i=1}^{l} \box0
}{
  f(x_{1},\ldots,x_{l}) \mv{c} C[\vec{x}, \vec{y}]
}
{general-gsos-rule}
where all the variables are distinct, {$m_{i}, n_{i} \geq 0$}, {$a_{ij}$}, 
{$b_{ik}$}, and {$c$} are actions, 
$f$ is an operation symbol from $\Sigma$ with arity $l$,
and {$C[\vec{x}, \vec{y}]$} is a $\Sigma$-context.

It is useful to name components of rules.
The operation symbol {$f$} is the \e{principal operation} of the rule,
and the term {$f(\xvec)$} is the \e{source}.
{$C[\xvec,\yvec]$} is the \e{target}; {$c$} is the \e{action};
the formulae above the line are the \e{antecedents} (sometimes denoted by 
$\ante{\rho}$); and the formula below the line is the \e{consequent} (sometimes
denoted by {$\cons{\rho}$}).

For a GSOS rule {$\rho$}, \sourcevars{\rho} and \targetvars{\rho} are
the sets of source and target variables of {$\rho$}; that is,
{$\sourcevars{\rho}$} is the set of variables in the source of {$\rho$}, and
{$\targetvars{\rho}$} is the set of {$y$}'s for antecedents {$x \mv{a} y$}. 
\end{definition}
\begin{definition}
\label{gsos-system}
A \e{GSOS language} is a pair $G = ( \Sigma_G , R_G)$ where $\Sigma_G$ is a
finite signature and $R_G$ is a finite set of GSOS rules over $\Sigma_G$.
\end{definition}
Informally, the intent of a GSOS rule is as follows.  Suppose that we
are wondering whether {$f(\Pvec)$} is capable of taking a {$c$}-step.
We look at each rule with principal operation {$f$} and action {$c$}
in turn.  We inspect each positive antecedent {$x_{i} \mv{a_{ij}}
y_{ij}$}, checking if {$P_{i}$} is capable of taking an
{$a_{ij}$}-step for each {$j$} and if so calling the
{$a_{ij}$}-children {{$Q_{ij}$}.} We also check the negative
antecedents; if {$P_{i}$} is \e{in}capable of taking a {$b_{ik}$}-step
for each {$k$}.  If so, then the rule \e{fires} and {$f(\Pvec) \mv{c}
C[\Pvec,\Qvec]$}.  This means that the transition relation $\mv{}_G$
associated with a GSOS language $G$ is the one defined by the rules
using structural induction over closed $\Sigma_G$-terms. This
transition relation is the unique sound and supported transition
relation. Here {\em sound} means that whenever a closed substitution
$\sigma$ `satisfies' the antecedents of a rule of the form
(\ref{general-gsos-rule}), written $\rightarrow_G, \sigma \models 
\ante{\rho}$, then $f(x_1,\ldots,x_l)\sigma \mv{c}_G
C[\vec{x}, \vec{y}]\sigma$. On the other hand, {\em supported} means
that any transition $p \mv{c}_G q$ can be obtained by instantiating
the conclusion of a rule $\rho$ of the form (\ref{general-gsos-rule})
with a substitution that satisfies its premises. In that case, we say
that $p \mv{c}_G q$ is supported by $\rho$. A rule $\rho$ is {\em
junk} in $G$ if it does not support any transition in $\rightarrow_G$.
We refer the interested
reader to~\cite{BloomIM95} for the precise definition of $\mv{}_G$ and
much more information on GSOS languages.  

For each closed term $p$, we define $\init{p} = \{ a\in\Act \mid
\exists q:~ p \mv{a}_G q\}$. For a GSOS language $G$, we let
$\init{\closed{\Sigma_G}} = \{ \init{p} \mid p\in\closed{\Sigma_G}
\}$.

The basic notion of equivalence among terms of a GSOS language we will
consider in this paper is  {\em bisimulation
equivalence}~\cite{Mi89,Pa81}.

\begin{definition}
Suppose $G$ is a GSOS language.  A binary relation
$\sim{} \subseteq \closed{\Sigma_G}\times\closed{\Sigma_G}$
over closed terms is a {\em bisimulation} if it is symmetric and 
{$p \sim q$} implies, for all $a \in\Act$,
\begin{description}
\item
If $p \mv{a}_G p'$ then, for some $q'$, $q \mv{a}_G q'$ and $p' \sim q'$.
\end{description}
We write $p \bisim_G q$ if there exists a bisimulation {$\sim$}
relating $p$ and $q$.  The subscript {$G$} is omitted when it is clear
from the context.
\end{definition}
It is well known that $\bisim_G$ is a 
congruence for all operation symbols $f$ of $G$~\cite{BloomIM95}. 

Let $\Algbis{G}$ denote the quotient
algebra of closed {$\Sigma_G$}-terms modulo bisimulation. 
Then, for
$P, Q \in\terms{\Sigma_G}$,
\[
\Algbis{G} \models P = Q ~~~ \Iff ~~~
( \forall \mbox{ closed } \Sigma_G\mbox{-substitutions } \sigma :
\subst{P}{\sigma} \bisim_{G} \subst{Q}{\sigma} ).
\]
In what follows, we shall sometimes consider equations that hold over
all GSOS languages that extend a GSOS language $G$ with new operation
symbols and rules for the new operations. The following notions
from~\cite{ABV94} put these extensions on a formal footing.
\begin{definition}\label{Sect:disjoint-extensions}
A GSOS language $G'$ is a {\em disjoint extension} of a GSOS language
$G$ 
if the signature and rules of {$G'$}
include those of {$G$}, and {$G'$} introduces no new rules for
operations of {$G$}.
\end{definition}
If $G'$ disjointly extends $G$ then $G'$ introduces no new outgoing
transitions for the closed terms of $G$.  This means in particular
that $P \bisim_G Q$ iff $P \bisim_{G'} Q$, for $P, Q
\in\closed{\Sigma_G}$. (More general conservative extension results
are discussed in, \eg,~\cite{Fokkink98,Mousavi05-ICALP}.)

For $G$ a GSOS language, let $\Classbis{G}$ stand for the class of all
algebras $\Algbis{G'}$, for $G'$ a disjoint extension of $G$.  Thus we
have, for $P, Q \in\terms{\Sigma_G}$,
\[
\Classbis{G} \models P = Q ~~~ \Iff ~~~ ( \forall G' : 
G' \text{  a disjoint extension of } G
~ \implies ~ \Algbis{G'} \models P = Q ).
\]
Checking the validity of a statement of the form $\Algbis{G} \models P
= Q$ or $\Classbis{G} \models P = Q$ according to the above definition
is at best very impractical, as it involves establishing bisimilarity
of all closed instantiations of the terms $P$ and $Q$.  It would thus
be helpful to have techniques that use only information obtainable
from these terms and that can be used to this end.  The development of
one such technique will be the subject of the remainder of this paper.

\paragraph{Eliminating Junk Rules} 
\label{junk-elimination}

Note that the definition of a GSOS language given above does not
exclude {\em junk} rules, \ie, rules that support no transition in
$\rightarrow_G$.  For example, the rule \Rule{x \mv{a} y ,\qquad x
\nv{a}}{f(x) \mv{a} f(y)} has contradictory antecedents and can never
fire.  Also it can be the case that a (seemingly innocuous) rule like
\Rule{x \mv{a} y}{f(x) \mv{b} f(y)} does not support any transition if
$\rightarrow_G$ contains no $a$-transitions. The possible presence of
junk rules does not create any problems in the development of the
theory of GSOS languages as presented in \cite{ABV94,BloomIM95}
and the authors of those papers saw no reason to deal with these rules
explicitly.

Our aim in this paper is to develop a test for the validity of
equalities between open terms in GSOS languages. The test we shall
present in later sections is based upon the idea of using GSOS rules
as `abstract transitions' in a bisimulation-like equivalence between
open terms. In order to ease the applicability of this method, it is
thus desirable, albeit not strictly necessary, to eliminate junk rules
from GSOS languages, as these rules would be interpreted as `potential
transitions' from a term which, however, cannot be realized.

Consider, for example, the trivial GSOS language $\triv$ with unary operations 
$f$ and $g$, and rule 
$$ f(x) \mv{a} f(x) \enspace . $$
It is immediate to see that $\Algbis{\triv} \models f(x) = g(y)$ as the set of
closed terms in $\triv$ is empty. However, if we considered the rule for 
$f$ as a transition from $f(x)$ in a simple-minded way, we would be led to 
distinguish $f(x)$ and $g(y)$ as the former has a transition while the latter 
does not. Obviously, the rule for $f$ given above is junk.

Clearly junk rules can be removed from a GSOS language $G$ without
altering the associated transition relation.  
Of course, in order to be able to remove junk rules from a GSOS
language, we need to be able to discover effectively what rules are
junk. This is indeed possible, as the following theorem, due to Aceto,
Bloom and Vaandrager~[Theorem~5.22]\cite{AcetoFV2001}, shows.

\begin{theorem}\label{decidability-of-junkness}
Let $G=(\Sigma_G,R_G)$ be a GSOS language. Suppose that $\rho\in
R_G$. Then it is decidable whether $\rho$ is junk in $G$.
\end{theorem}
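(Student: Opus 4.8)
The plan is to reduce junkness to a reachability question about the \emph{initial action sets} $\init{p}$ and then to show that the collection $\init{\closed{\Sigma_G}}$ of all such sets is effectively computable. First I would unfold the definitions of \emph{sound} and \emph{supported} given above: $\rho$ supports a transition iff there is a closed substitution $\sigma$ with $\rightarrow_G,\sigma\models\ante{\rho}$, so $\rho$ is junk exactly when the antecedents of $\rho$ are unsatisfiable. Since the source variables $x_1,\dots,x_l$ of a GSOS rule are distinct and each antecedent mentions exactly one of them, this satisfiability question splits into $l$ independent ones: writing $A_i=\{a_{ij}\mid 1\le j\le m_i\}$ and $B_i=\{b_{ik}\mid 1\le k\le n_i\}$ for the positive and negative actions demanded of $x_i$, a witnessing $\sigma$ exists iff for every $i$ there is a closed term $p_i$ with $A_i\subseteq\init{p_i}$ and $B_i\cap\init{p_i}=\emptyset$. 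Thus $\rho$ is \emph{not} junk iff for each $i$ the set $\init{\closed{\Sigma_G}}$ contains some $I$ with $A_i\subseteq I$ and $B_i\cap I=\emptyset$, a condition that is a finite check once $\init{\closed{\Sigma_G}}$ is known (recall $\Act$ is finite, so there are only finitely many candidate sets $I$).

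The heart of the argument is therefore to compute $\init{\closed{\Sigma_G}}$. The key observation is that whether a rule for $f$ fires on $f(p_1,\dots,p_l)$ depends only on which actions each argument can initially perform: a positive antecedent $x_i\mv{a_{ij}}y_{ij}$ is satisfiable precisely when $a_{ij}\in\init{p_i}$, and a negative antecedent $x_i\nv{b_{ik}}$ holds precisely when $b_{ik}\notin\init{p_i}$. Hence, for each operation $f$ of arity $l$ and each tuple $(I_1,\dots,I_l)$ of subsets of $\Act$, I would define $\mathrm{fire}_f(I_1,\dots,I_l)$ to be the set of actions $c$ for which some $\rho\in R_G$ with principal operation $f$ and action $c$ satisfies $A_i\subseteq I_i$ and $B_i\cap I_i=\emptyset$ for all $i$; this function is manifestly computable from the finite rule set $R_G$. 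Using soundness and supportedness one then shows $\init{f(p_1,\dots,p_l)}=\mathrm{fire}_f(\init{p_1},\dots,\init{p_l})$. Consequently $\init{\closed{\Sigma_G}}$ is the least fixpoint of the monotone operator $\Phi$ on the finite lattice $2^{2^{\Act}}$ given by $\Phi(S)=\{\mathrm{fire}_f(I_1,\dots,I_l)\mid f\in\Sigma_G\text{ has arity }l,\ I_1,\dots,I_l\in S\}$, where constants ($l=0$) supply the base elements. The two inclusions are proved by a routine induction on the number of iterations of $\Phi$ and by structural induction on closed terms, respectively.

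Since $\Act$ is finite the lattice $2^{2^{\Act}}$ is finite, so the ascending chain $\emptyset\subseteq\Phi(\emptyset)\subseteq\Phi(\Phi(\emptyset))\subseteq\cdots$ stabilises after at most $2^{|\Act|}$ steps, and each iterate is computable because $\Sigma_G$ and $R_G$ are finite and each $\mathrm{fire}_f$ is computable; this yields $\init{\closed{\Sigma_G}}$ effectively. Plugging this into the criterion of the first paragraph gives the desired decision procedure for junkness. I expect the main obstacle to be the key lemma $\init{f(p_1,\dots,p_l)}=\mathrm{fire}_f(\init{p_1},\dots,\init{p_l})$ together with the fixpoint characterisation of $\init{\closed{\Sigma_G}}$: one must argue carefully that no information beyond the initial action sets of the arguments is needed to determine firing, and handle the degenerate case in which $\Sigma_G$ has no constants, so that $\closed{\Sigma_G}=\emptyset$, $\init{\closed{\Sigma_G}}=\emptyset$, and every rule is junk, matching the $\triv$ example above. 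The reduction steps themselves are immediate from the definitions of soundness and supportedness.
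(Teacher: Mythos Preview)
The paper does not prove this theorem at all; it simply attributes the result to Aceto, Bloom and Vaandrager and cites Theorem~5.22 of~\cite{AcetoFV2001}. There is therefore nothing in the paper to compare your argument against. That said, your plan is correct and is exactly the standard argument one finds for this result: reduce ``$\rho$ is not junk'' to the existence, for each source variable $x_i$, of a set $I\in\init{\closed{\Sigma_G}}$ with $A_i\subseteq I$ and $B_i\cap I=\emptyset$, and compute $\init{\closed{\Sigma_G}}$ as the least fixpoint of the operator $\Phi$ on the finite lattice $2^{2^{\Act}}$ induced by the $\mathrm{fire}_f$ maps. The justification that only the initial action sets of the arguments matter is immediate from the shape of GSOS antecedents, and your handling of the degenerate case $\closed{\Sigma_G}=\emptyset$ is also correct (note that a rule for a constant has $l=0$, so your criterion is vacuously satisfied and such rules are never junk, which is consistent since any constant yields a closed term).
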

As a consequence of the above theorem, all the junk rules in a GSOS
language can be effectively removed in a pre-processing step before
applying the techniques described in the subsequent sections. Thus we
will henceforth restrict ourselves to GSOS languages without junk
rules.

\section{Ruloids and the Operational Specification of Contexts}
\label{Sect:ruloids}

As mentioned above, the essence of our method for checking the
validity of equations in GSOS languages is to devise a variation on
bisimulation equivalence between contexts which considers GSOS rules
as transitions. For primitive operations in a GSOS language $G$, the
rules in $R_G$ will be viewed as abstract transitions from terms of
the form $f(\xvec)$. However, in general, we will be dealing with
complex contexts in $\terms{\Sigma_G}$. In order to apply our ideas to
general open terms, we will thus need to associate with arbitrary
contexts a set of derived rules (referred to as {\em ruloids}
\cite{BloomIM95}) describing their behaviour.

A {\em ruloid} for a context $D[\xvec]$, with $\xvec=(x_1,\ldots,x_l)$, takes 
the form:
\RuleLab{
  \Union_{i=1}^{l} \set{x_{i} \mv{a_{ij}} y_{ij} | 1 \leq j \leq m_{i}}
    \setbox0=\mathbox{\set{x_{i} \nv{b_{ik}} | 1 \leq k \leq n_{i}}}
	\dp0=1.4\dp0
	  ~~ \union ~~ \Union_{i=1}^{l} \box0
	  }{
	    D[\xvec] \mv{c} C[\vec{x}, \vec{y}]
	    }
	    {general-ruloid}
where the variables are distinct, {$m_{i}, n_{i} \geq 0$}, {$a_{ij}$}, 
{$b_{ik}$}, and {$c$} are actions, 
and {$C[\vec{x}, \vec{y}]$} is a $\Sigma$-context.

\begin{definition}
A set of ruloids $R$ is {\em supporting}\footnote{Our terminology departs
slightly from that of \cite{BloomIM95}. Bloom, Istrail and Meyer use
`specifically witnessing' in lieu of `supporting'.} for a context $D[\xvec]$
and action $c$ iff all the consequents of ruloids in $R$ are of the form 
$D[\xvec] \mv{c} C[\vec{x}, \vec{y}]$, and whenever $D[\Pvec] \mv{c} p$, there 
are a ruloid $\rho\in R$ and a closed substitution $\sigma$ such that 
{$\cons{\rho}\sigma = D[\Pvec] \mv{c} p$} and $\rightarrow_G, \sigma \models 
\ante{\rho}$.
\end{definition}
The following theorem is a slightly sharpened version of the Ruloid Theorem in
\cite{BloomIM95}.

\begin{theorem}[Ruloid Theorem]\label{ruloid-thm}
Let $G$ be a GSOS language and $X \subseteq \Var$ be a finite set of variables.
For each $D[\xvec]\in\terms{\Sigma_G}$ and action $c$, there exists a finite
set $R_{D,c}$ of ruloids of the form (\ref{general-ruloid}) such that:
\begin{enumerate}
\item the ruloids in $R_{D,c}$ are sound and supporting for $D[\xvec]$, and 
\item for every $\rho\in R_{D,c}$, $\targetvars{\rho} \cap X = \emptyset$.
\end{enumerate}
\end{theorem}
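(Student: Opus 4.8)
The plan is to prove the Ruloid Theorem by structural induction on the context $D[\xvec]$, constructing the set $R_{D,c}$ explicitly at each step and verifying the two required properties. The base case handles single variables: for $D[\xvec] \syneq x_i$, the only transitions of $D[\Pvec] \syneq P_i$ are exactly the transitions of $P_i$, so the supporting set for action $c$ consists of the single trivial ruloid with antecedent $x_i \mv{c} y$ and consequent $x_i \mv{c} y$, where $y$ is chosen fresh (outside $X$). This ruloid is clearly sound, and it is supporting because any transition $P_i \mv{c} p$ is obtained by the substitution sending $y$ to $p$.

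For the inductive step, I would take $D[\xvec] \syneq f(D_1[\xvec],\ldots,D_l[\xvec])$ and combine the GSOS rules for $f$ with the ruloids already obtained for the subcontexts $D_1,\ldots,D_l$. Concretely, for each GSOS rule $\rho$ for $f$ with action $c$ of the form (\ref{general-gsos-rule}), each positive premise $z_{i} \mv{a_{ij}} w_{ij}$ asks that the subterm $D_i[\Pvec]$ make an $a_{ij}$-move; by the induction hypothesis I can substitute a supporting ruloid from $R_{D_i, a_{ij}}$ for that move, while each negative premise $z_i \nv{b_{ik}}$ translates, via the induction hypothesis applied to $D_i$ and $b_{ik}$, into the requirement that \emph{none} of the finitely many ruloids in $R_{D_i, b_{ik}}$ can fire. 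The antecedents of the composite ruloid are then obtained by gathering the antecedents of the chosen positive ruloids together with the negations of the premises of the ruloids witnessing the negative requirements, and its target is formed by substituting the targets of the chosen subruloids into the target context $C$ of $\rho$. Taking $R_{D,c}$ to be the collection of all such composite ruloids over all choices of $\rho$ and of subruloids yields a finite set, since each $R_{D_i,a_{ij}}$ is finite by induction and there are finitely many rules for $f$.

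Soundness of each composite ruloid follows because each firing step it encodes is justified by a sound rule or sound subruloid, so any closed substitution satisfying its antecedents witnesses the corresponding transition of $D[\Pvec]$. The supporting property requires the converse: given $D[\Pvec] \mv{c} p$, by the operational semantics this transition is supported by some GSOS rule $\rho$ for $f$, and the corresponding subtransitions $D_i[\Pvec] \mv{a_{ij}} \cdots$ are, by the induction hypothesis, each supported by a ruloid in $R_{D_i, a_{ij}}$; reassembling these choices identifies the composite ruloid in $R_{D,c}$ and the closed substitution that witness the transition. Property~2, freshness of target variables with respect to $X$, is maintained by always renaming the $y$-variables introduced in the base case and in the positive-antecedent ruloids to lie outside $X$ (and outside each other); this is harmless by the usual $\alpha$-conversion of rule variables.

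The main obstacle is the faithful handling of \emph{negative} premises. A negative premise $z_i \nv{b_{ik}}$ of the rule for $f$ must be re-expressed purely in terms of negative formulae over the process variables $\xvec$, but after the inductive substitution the relevant object is ``$D_i[\Pvec]$ cannot make a $b_{ik}$-move,'' which means \emph{no} ruloid in the finite supporting set $R_{D_i,b_{ik}}$ fires. Turning ``no ruloid fires'' into a set of premises in the restricted shape of (\ref{general-ruloid}) requires negating a finite disjunction of conjunctions of positive and negative formulae; the delicate point—and where the hypotheses of the GSOS format are essential—is that each ruloid in $R_{D_i,b_{ik}}$ tests only the variables $\xvec$ and that its premises can be negated and distributed into finitely many clauses each of which is itself a legitimate set of positive and negative antecedents, so that the resulting composite ruloids genuinely have the form (\ref{general-ruloid}). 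Verifying that this negation-and-distribution step stays within the format, and that the image is consistent (no $x_i \mv{a} y$ together with $x_i \nv{a}$ forced), is the technically demanding part of the argument.
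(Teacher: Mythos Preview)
Your proposal is correct and follows the standard structural-induction argument from \cite{BloomIM95}, which is exactly what the paper defers to; the paper's own proof is a one-line reference to that source, noting only that target variables are chosen outside $X$ to secure condition~2. One minor remark: your closing worry about consistency (avoiding $x_i \mv{a} y$ alongside $x_i \nv{a}$ in the same antecedent set) is unnecessary, since the theorem does not require the ruloids in $R_{D,c}$ to be junk-free---contradictory antecedents simply yield a ruloid that supports no transition, and the paper explicitly observes elsewhere that the ruloid construction may produce junk ruloids even when the underlying language has none.
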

\begin{proof}
A straightforward adaptation of the proof of the corresponding result
in \cite{BloomIM95}, where we take care in choosing the target
variables in ruloids so that condition 2 in the statement of the
theorem is met.
\end{proof}

\begin{definition}
Let $G$ be a GSOS language. For each $D[\xvec]\in\terms{\Sigma_G}$, the ruloid
set of $D[\xvec]$, notation $\ruloids{D[\xvec]}$, is the union of the sets
$R_{D,c}$ given by Theorem~\ref{ruloid-thm}.
\end{definition}
The import of the Ruloid Theorem is that the operational semantics of
an open term $P$ can be described by a finite set $\ruloids{P}$ of
derived GSOS-like rules. Examples of versions of the above result for
more expressive formats of operational rules may be found in, \eg, the
references~\cite{Bloom04,FokkinkGW06}.

\begin{example}
Consider a GSOS language $G$ containing the sequencing operation
specified by the following rules (one such pair of rules for each
$a\in \Act$).

\TwoRulesLab {x \mv{a} z} {x;y \mv{a} z;y}
	     {x \nv{b}~(\forall b\in\Act), y \mv{a} z}{x;y \mv{a} z}
             {rules4sequencing}
Let {$R[x,y,z]= x;(y;z)$} and {$L[x,y,z]=(x;y);z$}. The ruloids for {$L$} and 
{$R$} are:

\ThreeRulesLab {x \mv{a} x'} {L \mv{a} (x';y);z \\ R \mv{a} x';(y;z)}
               {x \nv{}, y \mv{a} y'} {L \mv{a} y';z \\ R \mv{a} y';z}
	       {x \nv{}, y \nv{}, z \mv{a} z'} {L \mv{a} z' \\ R \mv{a} z'}
               {ruloids4sequencing}
where we write $x \nv{}$ in the antecedents of ruloids as a shorthand for {$x
\nv{b}~(\forall b\in\Act)$}.
\end{example}
\begin{remark}\label{Rem:junk-ruloids}
Note that the set $\ruloids{D[\xvec]}$ of ruloids for a context
$D[\xvec]$ in a GSOS language $G$ may contain junk ruloids even when
$G$ has no junk rule. For example, consider the GSOS language
with constants $a$ and $\nil$, unary operation $g$ and binary
operation $f$ with the following rules. 
\ThreeRules {} {a \mv{a} \nil}
            {x \mv{a} x', y \mv{b} y'} {f(x,y) \mv{a} \nil}
	    {x \nv{a}} {g(x) \mv{b} \nil}
None of the above rules is junk. However, the only ruloid for the
context $f(x,g(x))$ is
\[
\frac{x \mv{a} x', x \nv{a}}{f(x,g(x)) \mv{a} \nil} \enspace ,
\]
which is junk. However, junk ruloids can be removed from the set of
ruloids for a context using Theorem~\ref{decidability-of-junkness}. In
what follows, we shall assume that the set of ruloids we consider
have no junk ruloids.
\end{remark}
In the standard theory on GSOS, it was not necessary to pay much attention to 
the variables in rules and ruloids, as one was only interested in the 
transition relation they induced over {\em closed terms}. (In the terminology 
of \cite{GrV92}, all the variables occurring in a GSOS rule/ruloid are not 
{\em free}.) Here, however, we intend to use ruloids as abstract transitions 
between open terms. In this framework it becomes desirable to give a more 
reasoned account of the role played by variables in ruloids, as the following example shows.

\begin{example}
Consider a GSOS language $G$ containing the unary operations $f$ and 
$g$ with the following rules. 

\TwoRules{x \mv{a} y}{f(x) \mv{a} y}
         {x \mv{a} z}{g(x) \mv{a} z}
It is easy to see that 
$\Algbis{G} \sat f(x) = g(x)$, regardless of the precise description of $G$. However, in order to prove this equality, 
any bisimulation-like equivalence relating open terms in $\terms{\Sigma_G}$ 
would have to relate the variables $y$ and $z$ in some way. Of course, this will
have to be done carefully, as $y$ and $z$ are obviously not equivalent in any 
nontrivial language.
\end{example}
As the above-given example shows, in order to be able to prove many simple 
equalities between open terms, it is necessary to develop techniques which
allow us to deal with the target variables in ruloids in a reasonable way. In 
particular, we should not give too much importance to the names of target 
variables in ruloids.

\begin{definition}[Valid Ruloids]\label{valid-ruloids}
Let $G$ be a GSOS language and $P\in \terms{\Sigma_G}$. We say that a ruloid
$\rho = \proofrule{H}{P\mv{a} P'}$ is {\em valid} for $P$ iff there exist 
$\rho'\in\ruloids{P}$ and an injective map $\sigma: \targetvars{\rho'}
\rightarrow (\Var - \sourcevars{\rho})$ such that $\rho$ is identical to 
$\rho'\sigma$. 
\end{definition}
For example, it is immediate to notice that the rules 
\TwoRules{x \mv{a} z}{f(x) \mv{a} z}
         {x \mv{a} y}{g(x) \mv{a} y}
are valid for the contexts $f(x)$ and $g(x)$ in the above-given example. 
Note, moreover, that each ruloid in $\ruloids{P}$ is 
a valid ruloid for $P$.

The following lemma states that, if $\rho'$ is obtained from $\rho$ as in
Definition~\ref{valid-ruloids}, then $\rho$ and $\rho'$ are, in a sense,
semantically equivalent ruloids.

\begin{lemma}\label{properties-of-valid-ruloids}
Let $G = (\Sigma_G,R_G)$ be a GSOS language and $P\in\terms{\Sigma_G}$. 
Assume that $\rho$ is a valid ruloid for $P$ because $\rho = \rho' \sigma$ for
some $\rho'\in \ruloids{P}$ and injective $\sigma:\targetvars{\rho'}\rightarrow
\Var - \sourcevars{\rho}$. Then:
\begin{enumerate}
\item $\rho$ is sound for $\rightarrow_G$;
\item $\support{\rho} = \support{\rho'}$, where, for a GSOS rule/ruloid
$\hat{\rho}$, $\support{\hat{\rho}}$ denotes the set of transitions supported
by $\hat{\rho}$.
\end{enumerate}
\end{lemma}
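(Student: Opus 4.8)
The plan is to exploit the fact that $\sigma$ renames \emph{only} the target variables of $\rho'$, so that the closed substitutions satisfying $\ante{\rho}$ and those satisfying $\ante{\rho'}$ are in bijective correspondence, and that under this correspondence the instantiated consequents coincide. Both claims then drop out of this single observation together with the soundness half of the Ruloid Theorem.

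First I would record the structural effect of $\sigma$. Writing $\rho'$ in the form (\ref{general-ruloid}) with source variables $\xvec$ and target variables $y_{ij}$, the fact that $\sigma$ is the identity on $\sourcevars{\rho'}$ and maps the $y_{ij}$ into $\Var - \sourcevars{\rho}$ means that the source $D[\xvec]$ and every negative antecedent are left untouched, each positive antecedent $x_i \mv{a_{ij}} y_{ij}$ becomes $x_i \mv{a_{ij}} \sigma(y_{ij})$, and the target $C[\xvec,\yvec]$ becomes $C[\xvec,\sigma(\yvec)]$. In particular $\sourcevars{\rho}=\sourcevars{\rho'}$, and since $\sigma$ is injective with $\sigma(\targetvars{\rho'}) \cap \sourcevars{\rho} = \emptyset$, the variables $\sigma(y_{ij})$ are pairwise distinct and disjoint from the $x_i$, so all the variables occurring in $\rho$ are distinct.

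The crux is the following correspondence between closed substitutions. From a closed $\tau$ define $\tau'$ by $\tau'(x_i)=\tau(x_i)$ and $\tau'(y_{ij})=\tau(\sigma(y_{ij}))$; conversely, from a closed $\tau'$ define $\tau$ by $\tau(x_i)=\tau'(x_i)$ and $\tau(\sigma(y_{ij}))=\tau'(y_{ij})$ (the values on the remaining, non-occurring variables being immaterial to both the antecedents and the consequent). The second assignment is well defined precisely because $\sigma$ is injective and its image avoids $\sourcevars{\rho}$, and the two assignments are mutually inverse. A direct comparison of the transition formulae, using the structural description above, shows that $\rightarrow_G, \tau \models \ante{\rho}$ iff $\rightarrow_G, \tau' \models \ante{\rho'}$, and that $\cons{\rho}\tau = \cons{\rho'}\tau'$, both being the transition $D[\xvec]\tau \mv{c} C[\tau(\xvec),\tau(\sigma(\yvec))]$. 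For part~1, given a closed $\tau$ with $\rightarrow_G, \tau \models \ante{\rho}$, pass to $\tau'$ to get $\rightarrow_G, \tau' \models \ante{\rho'}$; since $\rho'\in\ruloids{P}$ is sound by Theorem~\ref{ruloid-thm}, $\cons{\rho'}\tau'$ holds, and it equals $\cons{\rho}\tau$, so $\rho$ is sound. For part~2, the correspondence is a consequent-preserving bijection on satisfying closed substitutions, whence $\support{\rho}=\support{\rho'}$: every transition in $\support{\rho'}$ is realized by some $\tau'$, hence by the associated $\tau$ and so lies in $\support{\rho}$, and symmetrically.

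The only delicate point is the variable bookkeeping. Everything reduces to checking that $\tau \mapsto \tau'$ and its inverse are well defined, and this is exactly where injectivity of $\sigma$ and the freshness condition $\sigma(\targetvars{\rho'})\subseteq \Var - \sourcevars{\rho}$ are needed—specifically for reconstructing, from a satisfying substitution of $\rho'$, a satisfying substitution of $\rho$ yielding the same transition (the inclusion $\support{\rho'}\subseteq\support{\rho}$). The remaining comparisons of substituted transition formulae are routine.
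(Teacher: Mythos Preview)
Your argument is correct. The paper states this lemma without proof, evidently regarding it as routine; your write-up supplies exactly the natural argument one would expect---set up the bijection between closed substitutions satisfying $\ante{\rho}$ and those satisfying $\ante{\rho'}$ via composition with the variable renaming, check that consequents are preserved, and read off both soundness (from the Ruloid Theorem applied to $\rho'$) and equality of supports. You have also correctly isolated where injectivity of $\sigma$ and the freshness condition $\sigma(\targetvars{\rho'})\subseteq \Var\setminus\sourcevars{\rho}$ are actually used, namely in making the inverse passage $\tau'\mapsto\tau$ well defined, which is the step needed for the inclusion $\support{\rho'}\subseteq\support{\rho}$.
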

The set of valid ruloids for a context $P$ is infinite. However, by
Theorem~\ref{ruloid-thm}, we can always select a finite set of valid
ruloids for $P$ which is sound and supporting for it. We will often
make use of this observation in what follows.  

\section{A Logic of Transition Formulae}\label{transition-logic}

The set of ruloids associated with an open term $P$ in a GSOS language
characterizes its behaviour in much the same way as GSOS rules give the
behaviour of GSOS operations. In fact, by Theorem~\ref{ruloid-thm}, every
transition from a closed term of the form $P\sigma$ can be inferred from a
ruloid in $\ruloids{P}$. 

The antecedents of ruloids give the precise conditions under which
ruloids fire.  When matching ruloids in the definition of the
bisimulation-like relation between open terms that we aim at defining,
we will let a ruloid $\rho$ be matched by a set of ruloids $J$ only if
the antecedents of $\rho$ are stronger than those of the ruloids in $J$,
{\ie}, if whenever $\rho$ can fire under a substitution $\sigma$, then
at least one of the ruloids in $J$ can. In order to formalize this idea,
we will make use of a simple propositional logic of initial transition
formulae.

We define the language of \e{initial transition formulae} to be 
propositional logic with propositions of the form {$x \mv{a}$}. Formally, the
formulae of such a logic are given by the following grammar:
\[
F :: = \quad \T \mid x \mv{a}~\mid \neg F \mid F \wedge F 
\enspace .
\]
As usual, we write $\F$ for $\neg \T$, and $F \Or F'$ for $\neg ( \neg F \wedge
\neg F')$. 

Let $G$ be a GSOS language. A $G$-model for initial transition formulae is a 
substitution {$\sigma$} of processes (closed $\Sigma_G$-terms) for variables. 
We write $\rightarrow_G,\sigma \sat F$ if the closed substitution $\sigma$ is 
a model of the initial transition formula $F$. The satisfaction relation $\sat$ is defined by 
structural recursion on $F$ in the obvious way. In particular, 
\[
\rightarrow_G,\sigma \sat x \mv{a} \text{ iff } \sigma(x) \mv{a}_G p,~
\text{ for some } p \enspace .
\]
The reader familiar with Hennessy-Milner logic~\cite{HM85} will have
noticed that the propositions of the form {$x \mv{a}$} correspond to
Hennessy-Milner formulae of the form $\diam{a} \T$.  If {$H$} is a set
of positive or negative transition formulae (\eg, the hypotheses of a
rule or ruloid), then {$\hyps{H}$} is the conjunction of the
corresponding initial transition formulae.  For example,
$\hyps{\set{x\mv{a}y, z\nv{b}}} = (x \mv{a}) \wedge \neg (z \mv{b})$.
If {$J$} is a finite set of ruloids, we overload {$\hyps{\cdot}$} and
write:
\begin{equation}
\hyps{J} \deq \OR _{\rho' \in J} \hyps{\ante{\rho'}} \enspace .
\end{equation}
The semantic entailment preorder between initial transition formulae
may be now defined in the standard way; for formulae $F,F'$, we have 
$\sat_G F \imp F'$ iff every substitution that satisfies $F$ must also 
satisfy $F'$. 

In the remainder of this paper, we will use the semantic entailment
preorder between transition formulae in our test for equivalence of
open terms to characterize the fact that if one ruloid may fire, then
some other may do so too. Of course, in order to be able to use the
entailment preorder between transition formulae in our test for open
equalities, we need to able to check effectively when $\sat_G F \imp
F'$ holds. Fortunately, the semantic entailment preorder between
formulae is decidable, as the following theorem shows.

\begin{theorem}\label{decidability-of-refinement}
Let $G$ be a GSOS language. Then for all formulae $F$ and $F'$, it is decidable 
whether $\sat_G F \imp F'$ holds. 
\end{theorem}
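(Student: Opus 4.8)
The plan is to reduce the semantic-entailment question to a finite amount of propositional model checking, the only nontrivial ingredient being the computation of which initial-action sets are realizable by closed terms.

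First I would observe that the truth value of a formula under a $G$-model $\sigma$ depends only on the truth values it assigns to the finitely many atoms $x \mv{a}$ occurring in $F$ or $F'$, and that by definition $\rightarrow_G, \sigma \sat x \mv{a}$ holds iff $a \in \init{\sigma(x)}$. Hence, writing $x_1,\ldots,x_n$ for the variables occurring in $F$ or $F'$, the only information about $\sigma$ that matters is the tuple $(\init{\sigma(x_1)},\ldots,\init{\sigma(x_n)})$ of initial-action sets of the substituted terms. Since $\Act$ is finite, each such set ranges over the finite collection $2^{\Act}$, so there are only finitely many distinct Boolean assignments to the relevant atoms that a model can induce. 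Calling a Boolean assignment \emph{realizable} if it is induced by some closed $\Sigma_G$-substitution, we get that $\sat_G F \imp F'$ holds iff every realizable assignment making $F$ true also makes $F'$ true, which is a decidable propositional check once the finite set of realizable assignments is in hand.

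Second, I would reduce the realizable assignments to the finite set $\init{\closed{\Sigma_G}} \subseteq 2^{\Act}$ of initial-action sets of closed terms. Because a closed substitution assigns terms to distinct variables independently, a combined assignment is realizable iff each variable's local pattern is realizable on its own; and a local pattern for $x_i$ is realizable iff there is some $A \in \init{\closed{\Sigma_G}}$ that agrees with it on the actions appearing in the atoms. Thus everything hinges on computing the finite set $\init{\closed{\Sigma_G}}$.

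The crux — and the step I expect to be the main obstacle — is showing that $\init{\closed{\Sigma_G}}$ is effectively computable. The key observation is that, in a GSOS language, whether $c \in \init{f(p_1,\ldots,p_l)}$ is determined solely by the tuple $(\init{p_1},\ldots,\init{p_l})$: a rule of the form (\ref{general-gsos-rule}) for $f$ with action $c$ fires under a substitution exactly when each positive antecedent $x_i \mv{a_{ij}}$ is satisfied (that is, $a_{ij}\in\init{p_i}$) and each negative antecedent $x_i \nv{b_{ik}}$ is satisfied (that is, $b_{ik}\notin\init{p_i}$), and these conditions mention only the initial-action sets of the arguments. This lets me define, for each operation $f$, an operator on tuples of subsets of $\Act$ returning the set of actions enabled by a firing rule, and to compute $\init{\closed{\Sigma_G}}$ as the least fixpoint obtained by starting from the initial-action sets of the constants and closing under all these operators. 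Since $2^{2^{\Act}}$ is finite and the operator monotone, the fixpoint is reached in finitely many steps and is computable; a routine soundness-and-completeness argument (soundness by exhibiting witnessing terms, completeness by structural induction on closed terms) identifies it with $\init{\closed{\Sigma_G}}$. This computability is precisely the ingredient underlying the decidability of junk rules (\thmref{decidability-of-junkness}), so it may alternatively be invoked from \cite{AcetoFV2001} rather than reproved. With $\init{\closed{\Sigma_G}}$ computed, the finite propositional check of the first paragraph completes the decision procedure; the case $\closed{\Sigma_G}=\emptyset$ needs no special treatment, since then no assignment is realizable and the entailment holds vacuously.
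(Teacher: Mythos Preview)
Your argument is correct: the reduction to a finite propositional check over realizable assignments, the per-variable independence, and the fixpoint computation of $\init{\closed{\Sigma_G}}$ all go through as you describe, and the observation that $\init{f(p_1,\ldots,p_l)}$ depends only on the tuple $(\init{p_1},\ldots,\init{p_l})$ is exactly what makes the fixpoint well defined. The paper itself omits the proof of this theorem entirely, so there is nothing to compare your route against directly; but your approach is the natural one, and your remark that the computability of $\init{\closed{\Sigma_G}}$ is the same ingredient behind \thmref{decidability-of-junkness} (which the paper attributes to \cite{AcetoFV2001}) is well placed and is presumably what the authors have in mind.
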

Theorem~\ref{decidability-of-refinement} tells us that we can safely 
use semantic entailment between formulae in our simple propositional language 
in the test for the validity of open equations in GSOS languages which we will 
present in what follows.

\section{Rule-matching Bisimulation}\label{Sect:rule-match}

We will now give a method to check the validity of equations in the
algebra $\Algbis{G}$ based on a variation on the bisimulation
technique. Our approach has strong similarities with, and is a
sharpening of, {\em FH-bisimulation}, as proposed by de Simone in
\cite{dS84,dS85}. (We remark, in passing, that FH-bisimilarity checking
has been implemented in the tool ECRINS~\cite{DoumencMS,MV91}.)

\begin{definition}[Rule-matching Bisimulation]
\label{open-bis-defn}
Let $G$ be a GSOS language. A relation $\approx {\subseteq}
\terms{\Sigma_G} \times \terms{\Sigma_G}$ is a {\em rule-matching
bisimulation} if it is symmetric and $ P \approx Q$ implies
\begin{description}
\item for each ruloid $\proofrule{H}{P\mv{a}P'}$ in the ruloid set of $P$, 
there exists a finite set $J$ of valid ruloids for $Q$ such that:
\begin{enumerate}
\item \label{obisim-1}
      For every {$\rho' = \frac{H'}{Q \mv{a'} Q' }\in J$}, we have:
      \begin{enumerate}
      \item {$a'=a$}, \label{obisim-1-1}
      \item {$P' \approx Q'$}, \label{obisim-1-2}
      \item {$\parens{\targetvars{\rho'} \cup \targetvars{\rho}} \cap
            \parens{\sourcevars{\rho} \cup \sourcevars{\rho'}} =
            \emptyset$} and
            \label{obisim-1-3}
      \item if {$y \in \targetvars{\rho} \intersect
            \targetvars{\rho'}$}, then {$x \mv{b} y \in H \intersect
            H'$} for some source variable {$x \in \sourcevars{\rho}
            \intersect \sourcevars{\rho'}$} and action {$b$}.
            \label{obisim-1-4}
      \end{enumerate}
\item {$\models_G \hyps{\rho} \imp\hyps{J}$}. \label{obisim-2}
\end{enumerate}
\end{description} 
We write $P \obisim_G Q$ if there exists a rule-matching bisimulation
$\approx$ relating $P$ and $Q$. We sometimes refer to the relation
$\obisim_G$ as {\em rule-matching bisimilarity}.
\end{definition} 
Note that, as the source and target variables of GSOS rules and
ruloids are distinct, condition~\ref{obisim-1-3} is equivalent to
{$\targetvars{\rho} \intersect \sourcevars{\rho'} = \emptyset$} and
{$\targetvars{\rho'} \intersect \sourcevars{\rho} =
\emptyset$}. Moreover,  {$\obisim_G$} is just
standard bisimilarity over closed terms.


Of course, the notion of rule-matching bisimulation is reasonable only
if we can prove that it is sound with respect to the standard
extension of bisimulation equivalence to open terms. This is the
import of the following theorem. 

\begin{theorem}[Soundness]\label{soundness}  
Let $G$ be a GSOS language. Then, for all $P,Q \in\terms{\Sigma_G}$, $P \obisim_G
Q$ implies $\Algbis{G} \models P = Q$.
\end{theorem}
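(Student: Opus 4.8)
The plan is to show that rule-matching bisimilarity, restricted to closed terms, yields an ordinary bisimulation, and then to lift this to open terms via closed substitutions. Concretely, suppose $\approx$ is a rule-matching bisimulation with $P \approx Q$. I want to prove $\Algbis{G} \models P = Q$, i.e.\ that $P\sigma \bisim_G Q\sigma$ for every closed $\Sigma_G$-substitution $\sigma$. The natural candidate for an ordinary bisimulation relation on closed terms is
\[
\calr = \set{(P\sigma, Q\sigma) \mid P \approx Q,~\sigma \text{ a closed } \Sigma_G\text{-substitution}},
\]
and the goal reduces to verifying that $\calr$ satisfies the transfer property of Definition~2.4. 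Since $\approx$ is symmetric, so is $\calr$, so only one direction of the matching needs to be checked.

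First I would fix a transition $P\sigma \mv{a}_G p'$. By the Ruloid Theorem (Theorem~\ref{ruloid-thm}), this transition is supported by some ruloid $\rho = \proofrule{H}{P \mv{a} P'}$ in $\ruloids{P}$, via a substitution $\tau$ extending $\sigma$ on the target variables, so that $\rightarrow_G,\tau \models \ante{\rho}$ and $p' \syneq P'\tau$. Now I apply the rule-matching condition to $\rho$: there is a finite set $J$ of valid ruloids for $Q$ meeting conditions~\ref{obisim-1} and~\ref{obisim-2}. The key use of condition~\ref{obisim-2}, namely $\models_G \hyps{\rho} \imp \hyps{J}$, is that since $\tau$ (equivalently $\sigma$, as the positive antecedents of $\rho$ only constrain source variables) satisfies the initial transition formula $\hyps{\ante{\rho}}$, it must satisfy $\hyps{J}$, whence there is a particular ruloid $\rho' = \proofrule{H'}{Q \mv{a'} Q'} \in J$ with $\rightarrow_G, \sigma \models \hyps{\ante{\rho'}}$. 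Here I expect the main obstacle to lie: I must upgrade satisfaction of the \emph{initial} formula $\hyps{\ante{\rho'}}$ (which only records whether each source variable can take a first step) to genuine satisfaction of the full antecedent $\ante{\rho'}$ under a suitably \emph{extended} substitution $\tau'$. The positive premises $x_i \mv{b} y$ of $\rho'$ are existential, so $\sigma(x_i) \mv{b}_G$ guarantees some child which I assign to $\tau'(y)$; the negative premises are handled directly because $\hyps{\cdot}$ faithfully encodes them and initial transition formulae exactly capture first-move capability. Care is needed so that $\tau'$ agrees with $\tau$ on the shared source variables $\sourcevars{\rho} \cup \sourcevars{\rho'}$ (both extend $\sigma$ there) while independently choosing witnesses on the disjoint target variables; conditions~\ref{obisim-1-3} and~\ref{obisim-1-4} on variable disjointness and sharing are precisely what make such a joint extension $\tau'$ well defined and consistent.

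Having obtained $\rho'$ firing under $\tau'$, soundness of $\rho'$ (Lemma~\ref{properties-of-valid-ruloids}, part~1) yields the matching transition $Q\sigma = Q\tau' \mv{a'}_G Q'\tau'$, and condition~\ref{obisim-1-1} gives $a' = a$. It remains to check that the resulting pair lands back in $\calr$: by condition~\ref{obisim-1-2} we have $P' \approx Q'$, and I would argue that $P'\tau$ and $Q'\tau'$ arise as $P'$ and $Q'$ under a common closed substitution, so that $(p', Q'\tau') = (P'\tau, Q'\tau') \in \calr$. This last point is where the variable-management conditions~\ref{obisim-1-3}–\ref{obisim-1-4} pay off a second time: because I may arrange $\tau$ and $\tau'$ to coincide on every variable relevant to both targets (shared target variables $y$ are forced by~\ref{obisim-1-4} to come from a shared premise $x \mv{b} y$, so they receive the same witness), the pair $(P'\tau, Q'\tau')$ is indeed of the form $(P'\theta, Q'\theta)$ for a single closed $\theta$, placing it in $\calr$. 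Thus $\calr$ is a bisimulation; since $(P\sigma, Q\sigma) \in \calr$ for all $\sigma$, we conclude $P\sigma \bisim_G Q\sigma$ for every $\sigma$, i.e.\ $\Algbis{G} \models P = Q$, as required.
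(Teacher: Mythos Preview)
Your proposal is correct and follows the standard strategy one expects here: close the rule-matching bisimulation under all closed substitutions and verify that the resulting relation on closed terms is an ordinary bisimulation, using the Ruloid Theorem to analyse transitions from $P\sigma$, condition~\ref{obisim-2} to select a matching ruloid $\rho'\in J$, and conditions~\ref{obisim-1-3}--\ref{obisim-1-4} to merge the two witnessing substitutions into a single $\theta$ so that $(P'\tau,Q'\tau')\in\calr$. The paper does not include a detailed proof of Theorem~\ref{soundness} (it is stated without proof, and the proof of Theorem~\ref{soundnessdis} merely says that the argument ``can be replayed''), so there is no explicit text to compare against; however, your outline is exactly the argument the definition is engineered to support, and in particular your identification of the role of clauses~\ref{obisim-1-3} and~\ref{obisim-1-4} in guaranteeing that a common extension $\theta$ exists is the crux of the matter.
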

The import of the above theorem is that, when trying to establish the
equivalence of two contexts $P$ and $Q$ in a GSOS language $G$, it is
sufficient to exhibit a rule-matching bisimulation relating them. A
natural question to ask is whether the notion of rule-matching
bisimulation is {\em complete} with respect to equality in
$\Algbis{G}$, {\ie} whether $\Algbis{G} \sat P=Q$ {implies} $P \obisim
Q$, for all $P,Q\in \terms{\Sigma_G}$.  
Below, we shall provide a counter-example to the above statement.

\begin{example}\label{Ex:alldisjointexts}
Consider a GSOS language $G$ consisting of a constant 
$(a+b)^\omega$ with rules

\TwoRules {} {(a+b)^\omega \mv{a} (a+b)^\omega}
	  {}{(a+b)^\omega \mv{b} (a+b)^\omega}

\noindent 
and unary function symbols $f$, $g$, $h$ and $i$ with rules
\FourRulesFullStop{x\mv{a}y_1, x\mv{b}y_2} {h(x) \mv{a} f(x)}
	  {x\mv{a}y_1, x\mv{b}y_2}{i(x) \mv{a} g(x)} 
          {x\mv{a}y_1, x\mv{b}y_2} { f(x) \mv{a} f(x)}
	  {x\mv{a}y_1}{g(x) \mv{a} g(x)}

\noindent 
First of all, note that no rule in $G$ is junk as the hypotheses of
each of the above rules are satisfiable. 

We claim that $\Algbis{G} \sat h(x)=i(x)$. To see this, it
is sufficient to note that, for all $p\in\closed{\Sigma_G}$, 
\begin{eqnarray*}
h(p) \mv{c} r & \Iff & p\mv{a},~p\mv{b},~ c=a \mbox{ and } r\syneq f(p) 
\quad \text{and} \\
i(p) \mv{c} r & \Iff & p\mv{a},~p\mv{b},~ c=a \mbox{ and } r\syneq g(p)
\enspace .
\end{eqnarray*}
Moreover, for a term $p$ such that $a,b\in\init{p}$, it is immediate to see 
that $f(p) \bisim g(p)$ as both these terms can only perform action $a$ 
indefinitely. 

However, $h(x)$ and $i(x)$ are {\em not} rule-matching bisimilar. In
fact, in order for $h(x) \obisim i(x)$ to hold, it must be the case
that $f(x) \obisim g(x)$.  This does not hold as the unique rule for
$g(x)$ cannot be matched by the rule for $f(x)$ because $\not\models
(x\mv{a}) \Rightarrow (x\mv{a} \mathbin{\wedge}~x\mv{b})$. Take,
{\eg}, a closed substitution $\sigma$ such that $\sigma(x)\syneq
h((a+b)^\omega)$.

Intuitively, the failure of rule-matching bisimulation in the above
example is due to the fact that, in order for $\Algbis{G} \sat
h(x)=i(x)$ to hold, it is sufficient that $f(p)$ and $g(p)$ be
bisimilar for those terms $p$ which enable transitions from $h(p)$ and
$i(p)$, rather than for arbitrary instantiations.
\end{example}
Note that the equation discussed in
Example~\ref{Ex:alldisjointexts} is valid in each disjoint extension
of the GSOS language considered there. 
In the following section we will provide examples that will,
hopefully, convince our readers that rule-matching bisimulation is a
tool which, albeit not complete, can be used to check the validity of
many interesting equations.

It is natural to ask oneself at this point whether rule-matching
bisimilarity is preserved by taking disjoint extensions, \ie, whether
an equation that has been proven to hold in a language $G$ using
rule-matching bisimilarity remains sound for each disjoint extension
of $G$. The following example shows that this is not the case. 
\begin{example}
Consider a GSOS language $G$ consisting of a constant $a^\omega$ with
rule $a^\omega \mv{a} a^\omega$ and unary operations $f$ and $g$ with
the following rules. 

\TwoRules {x\mv{a}x'} {f(x) \mv{a} f(x)}
	  {y\mv{a}y'}{g(y) \mv{a} g(y)} 

\noindent 
First of all, note that no rule in $G$ is junk as the hypotheses of
each of the above rules are satisfiable. 

We claim that $\Algbis{G} \sat f(x)=g(y)$. To see this, it is
sufficient to note that each closed term in the language is bisimilar
to $a^\omega$. Moreover, $f(x) \obisim_G g(y)$ holds because the
formulae $x \mv{a}$ and $y \mv{a}$ are logically equivalent in $G$. On
the other hand, consider the disjoint extension $G'$ of $G$ obtained by
adding the constant $\nil$ with no rules to $G$. In this disjoint
extension, $f(x) \obisim_{G'} g(y)$ does {\em not} hold because $x
\mv{a}$ does not entail $y \mv{a}$.
\end{example}
However, rule-matching bisimilarity in language $G$ {\em is} preserved by
taking disjoint extensions if the language $G$ is sufficiently
expressive in the sense formalized by the following result.
\begin{theorem}\label{soundnessdis}  
Let $G$ be a GSOS language such that
$\init{\closed{\Sigma_G}}=2^{\Act}$. Then, for all $P,Q
\in\terms{\Sigma_G}$, $P \obisim_G Q$ implies $\Classbis{G} \models P
= Q$.
\end{theorem}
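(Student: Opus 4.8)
The plan is to reduce the statement to the Soundness Theorem (\thmref{soundness}) by showing that, under the hypothesis $\init{\closed{\Sigma_G}}=2^{\Act}$, rule-matching bisimilarity is \emph{preserved} by disjoint extensions. Concretely, I would prove that for every disjoint extension $G'$ of $G$ and all $P,Q\in\terms{\Sigma_G}$, $P\obisim_G Q$ implies $P\obisim_{G'}Q$. Granting this, \thmref{soundness} applied to $G'$ yields $\Algbis{G'}\models P=Q$; since this holds for every disjoint extension $G'$, the definition of $\Classbis{G}$ gives $\Classbis{G}\models P=Q$.

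The preservation claim rests on two observations. First, for any $P\in\terms{\Sigma_G}$ the ruloid set $\ruloids{P}$, the associated notion of valid ruloid, and the junkness of a ruloid are unchanged when passing from $G$ to $G'$. Indeed, $P$ is built solely from operations of $G$, and a disjoint extension adds no rules for these operations, so the inductive construction underlying the Ruloid Theorem (\thmref{ruloid-thm}) produces the same ruloids in both languages; moreover the target of a ruloid is a $\Sigma_G$-context, so it too stays in $\terms{\Sigma_G}$. Junkness coincides because a ruloid supports a transition precisely when its antecedents are satisfiable, and---since satisfaction of initial transition formulae depends only on the init-sets of the substituted terms (see below)---the hypothesis makes satisfiability the same question in $G$ and in $G'$.

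Second, and this is where the hypothesis is essential, the semantic entailment preorder on initial transition formulae is the same in $G$ and $G'$; that is, $\models_G F\imp F'$ iff $\models_{G'}F\imp F'$. The point is that whether a substitution $\sigma$ satisfies an initial transition formula depends only on the sets $\init{\sigma(x)}$, since each atom $x\mv{a}$ merely tests $a\in\init{\sigma(x)}$. When $\init{\closed{\Sigma_G}}=2^{\Act}$ these init-sets range, independently over the variables, over every subset of $\Act$, so $\models_G F\imp F'$ is equivalent to $F\imp F'$ being a propositional tautology in the atoms $x\mv{a}$. The same reduction holds in $G'$: a disjoint extension introduces no new outgoing transitions for closed $\Sigma_G$-terms, whence $\init{\closed{\Sigma_{G'}}}\supseteq\init{\closed{\Sigma_G}}=2^{\Act}$ and therefore $\init{\closed{\Sigma_{G'}}}=2^{\Act}$ as well. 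Hence both entailment relations coincide with propositional validity, and thus with each other.

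Combining the two observations, I would conclude by taking a rule-matching bisimulation $\approx$ over $\terms{\Sigma_G}$ witnessing $P\obisim_G Q$ and checking that the very same relation is a rule-matching bisimulation in $G'$. This is immediate from \defref{open-bis-defn}: for a pair of related $\Sigma_G$-terms, the ruloid set of the left term and the valid ruloids of the right term are the same in $G'$ as in $G$ (first observation), the side conditions \ref{obisim-1-1}--\ref{obisim-1-4} are purely syntactic constraints on actions and variable names and so are unaffected, and condition \ref{obisim-2}, $\models_{G'}\hyps{\rho}\imp\hyps{J}$, holds exactly because its $G$-counterpart does (second observation). Since the targets of the matched ruloids are again $\Sigma_G$-terms, $\approx$ is closed under the matching step, so it is a rule-matching bisimulation in $G'$ and witnesses $P\obisim_{G'}Q$. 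The step deserving the most care is the coincidence of the entailment preorders: one must argue precisely that satisfaction of initial formulae factors through init-sets, that the hypothesis makes these range over all of $2^{\Act}$, and---crucially---that this maximality is inherited by every disjoint extension, so that neither direction of the entailment can be broken by the new closed terms of $G'$; the invariance of ruloid sets and of junkness is then comparatively routine bookkeeping.
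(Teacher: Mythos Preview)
Your proposal is correct and relies on exactly the two observations the paper singles out: that ruloid sets for $\Sigma_G$-terms are unchanged in a disjoint extension, and that under the hypothesis $\init{\closed{\Sigma_G}}=2^{\Act}$ the entailment preorder $\models_G$ coincides with $\models_{G'}$. The only difference is packaging: the paper says to ``replay'' the proof of \thmref{soundness} in $G'$ using these observations, whereas you first show that the witnessing relation $\approx$ is itself a rule-matching bisimulation in $G'$ and then invoke \thmref{soundness} for $G'$ as a black box; this is the same argument, just slightly more modular.
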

\begin{proof}
The proof of Theorem~\ref{soundness} can be replayed, making use of
the observations that for each disjoint extension $G'$ of $G$, the
collection of ruloids in $G'$ for a $\Sigma_G$-term $P$ coincides with
the collection of ruloids for $P$ in $G$. Moreover, in light of the
proviso of the theorem, $\sat_G F \Implies F'$ iff $\sat_{G'} F \Implies
F'$, for all formulae $F$ and $F'$.
\end{proof}

\noindent
A conceptually interesting consequence of the above result is that,
when applied to a sufficiently expressive GSOS language $G$,
rule-matching bisimilarity is a proof method that is, in some sense,
{\em monotonic with respect to taking disjoint extensions of the
original language}. This means that rule-matching bisimilarity can
only prove the validity of equations in $G$ that remain true in all
its disjoint extensions. A similar limitation applies to the proof
methods presented in, \eg,~\cite{dS85,Weerdenburg2008}.

\section{Examples}\label{Sect:examples}

We shall now present some examples of applications of the
`rule-matching bisimulation technique'. In particular, we shall show
how some well known equations found in the literature on process
algebra can be verified using it. 

\paragraph{Associativity of Sequencing}
Let $G$ be any GSOS language containing the sequencing operation specified by 
(\ref{rules4sequencing}). Let {$R[x,y,z]= x;(y;z)$} and {$L[x,y,z]=(x;y);z$}.
The ruloids for these two contexts were given in (\ref{ruloids4sequencing}).

Consider the symmetric closure of the relation 
\begin{eqnarray*}
\approx & \deq & \set{ (R[x,y,z],L[x,y,z]) \mid x,y,z\in\Var} \cup \cali 
\end{eqnarray*}
where $\cali$ denotes the identity relation over
$\terms{\Sigma_G}$. By Theorem~\ref{soundness}, to show that the
contexts $L$ and $R$ are equivalent, it is sufficient to check that
what we have just defined is a rule-matching bisimulation. In
particular, we need to check the correspondence between the ruloids
for these contexts (which is the one given in
(\ref{ruloids4sequencing})), and then check that the targets are
related by $\approx$. The verification of these facts is trivial. Thus
we have shown that sequencing is associative in any GSOS language that
contains the sequencing operation.

The associativity proofs for the standard parallel composition
operators found in {\eg}~ACP, CCS, SCCS and {\sc Meije}, and for the
choice operators in those calculi follow similar lines.  


\paragraph{Commutativity of Interleaving Parallel Composition}

Many standard axiomatizations of behavioural equivalences in the
literature cannot be used to show that, {\eg}, parallel composition is
commutative and associative. We will now show how this can be easily
done using the rule-matching bisimulation technique. We will exemplify
the methods by showing that the interleaving parallel composition
operation $\interleave$ \cite{Ho85} is commutative.

We recall that the rules for $\interleave$ are (one pair of rules for
each $a\in\Act$): 
\TwoRulesLabFullStop {x \mv{a} x'} {x\interleave y \mv{a}
x'\interleave y} {y \mv{a} y'} {x\interleave y \mv{a} x\interleave
y'}{parallel-rules} 
The ruloids for the contexts $x\interleave y$ and
$y\interleave x$ given by Theorem~\ref{ruloid-thm} are the following ones (one pair of
ruloids for each $a\in\Act$). \TwoRules {x \mv{a} x'} {x\interleave
y \mv{a} x'\interleave y \\ y \interleave x \mv{a} y \interleave x' }
{y \mv{a} y'} {x\interleave y \mv{a} x\interleave y' \\ y \interleave
x \mv{a} y' \interleave x } It is now immediate to
see that the relation $\set{ (x\interleave y,y\interleave x) \mid
x,y\in\Var}$ is a rule-matching bisimulation in any GSOS language that
includes the interleaving operator. In fact, the correspondence
between the ruloids is trivial and the targets are related by the
above relation.

\paragraph{De Simone's Clock Example}
\label{clock}
In his seminal paper \cite{dS85}, de Simone presents a bisimulation
based technique useful for proving open equations between contexts
specified using the so-called de Simone format of operational
rules. On page 260 of that paper, de Simone discusses two examples
showing that there are valid open equalities between contexts that his
technique cannot handle.  Below, we shall discuss a variation on one
of his examples, the {\em clock example}, which maintains all the
characteristics of the original one in \cite{dS85}, showing how
rule-matching bisimulations can be used to check the relevant
equalities.
\begin{figure}[t]
\begin{quote}
Fix a partial, commutative and associative
function $\gamma : \Act\times\Act\rightharpoonup\Act$, which describes
the synchronization between actions. 
The $\|$ operation can be described by the
rules (for all $a, b, c \in\Act$):
\[
\proofrule{x \mv{a} x'}{x \| y \mv{a} x' \| y} \qquad\qquad
\proofrule{y \mv{a} y'}{x \| y \mv{a} x \| y'} \qquad\qquad
\proofrule{x \mv{a} x' ,~ y \mv{b} y'}{x \| y \mv{c} x' \| y'}
 ~ \gamma(a,b) = c
\]
\end{quote}
\caption{The rules for $\|$\label{par+sync}}
\end{figure}

Suppose we have a GSOS language which includes parallel composition
with synchronization, $\|$, described by the rules in
Figure~\ref{par+sync}, the interleaving operation, $\interleave$,
described by the rules \eqnref{parallel-rules}, and a constant
$\clock$ (the {\em clock over the whole set of actions} in de Simone's
terminology) with rules
$$\clock \mv{a} \clock ~~~ (a\in\Act) \enspace .$$
Consider the contexts $C[x]\syneq x \| \clock$ and $D[x]\syneq x
\interleave \clock$. We do have that, regardless of the precise
description of $G$, the terms $C[x]$, $D[x]$ and $\clock$ are all equal in
$\Algbis{G}$. This can be easily shown by establishing that the
symmetric closures of the relations $\set{ (C[p],\clock) \mid p\in
\closed{\Sigma_G}}$ and $\set{ (D[p],\clock) \mid p\in
\closed{\Sigma_G}}$ are bisimulations. However, as argued in
\cite{dS85}, de Simone's techniques based on FH-bisimilarity cannot be
used to establish these equalities. We can instead show their
validity using our rule-matching bisimulation technique as follows.

First of all, we compute the ruloids for the contexts $C[x]$ and $D[x]$. 
These are, respectively,  
\[
\proofrule{}{C[x] \mv{a} C[x]} ~ (a\in\Act) \qquad
\proofrule{x \mv{a} x'}{C[x] \mv{a} C[x']} ~ (a\in\Act) \qquad
\proofrule{x \mv{a} x' }{C[x] \mv{b} C[x'] }
 ~ \exists c\in\Act: \gamma(a,c) =b 
  \]
and 
\[
\proofrule{}{D[x] \mv{a} D[x]} ~ (a\in\Act) \qquad
\proofrule{x \mv{a} x'}{D[x] \mv{a} D[x']} ~ (a\in\Act) \enspace .
\]
Now, it can be easily checked that the symmetric closure of the relation 
$$
\set{ (C[x],D[z]), (C[x],\clock), (D[x],\clock) \mid x,z\in\Var}
$$
is a rule-matching bisimulation. The point is that {\em any} ruloid for $C[x]$ can 
be matched by an axiom for $D[z]$, and, vice versa, {\em any} ruloid for 
$D[z]$ can be matched by an axiom for $C[x]$. This is because it is always 
the case that $\models (x \mv{a}) \Implies \T$ for $x\in\Var$.

\section{Partial Completeness Results}\label{Sect:partial-completeness}


In previous sections, we showed that the rule-matching bisimulation
technique, albeit not complete in general, can be used to prove
several important equations found in the literature on process
algebras. In particular, the soundness of all the equations generated
by the methods in \cite{ABV94} can be proven by exhibiting appropriate
rule-matching bisimulations.
A natural question to ask is whether there are some classes of
contexts for which rule-matching bisimulations give us a complete proof
technique for establishing equality between contexts. One such class
of contexts is, of course, that of closed terms, as rule-matching bisimilarity
coincides with bisimilarity over processes.

Below we will present another partial completeness result, this time
with respect to a class of contexts that we call `persistent'.

\begin{definition}
Let $G$ be a GSOS language and $P\in\terms{\Sigma_G}$. We say that $P$
is {\em persistent} iff each ruloid in $\ruloids{P}$ is of the form
$\frac{H}{P \mv{a} P}$
for some $a\in\Act$.
\end{definition}
Thus persistent contexts are terms that test their arguments, perform actions
according to the results of these tests, and then remain unchanged. 
\begin{theorem}[Completeness for Persistent Contexts]
\label{persistent-completeness}
Let $G$ be a GSOS language. Then $\Algbis{G} \models P = Q$ iff $P
\obisim Q$, for all persistent $P,Q\in\terms{\Sigma_G}$.
\end{theorem}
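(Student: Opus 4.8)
The plan is to prove the two implications of the biconditional separately. The implication $P \obisim Q \implies \Algbis{G}\models P = Q$ is immediate from the Soundness Theorem (Theorem~\ref{soundness}), which holds for arbitrary contexts. Hence all the work lies in the converse: assuming $\Algbis{G}\models P=Q$ for persistent $P,Q$, I must construct a rule-matching bisimulation relating them. Throughout I assume, as in Remark~\ref{Rem:junk-ruloids}, that the ruloid sets contain no junk ruloids.

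First I would pin down what bisimilarity of persistent terms means semantically. Since every ruloid for a persistent $P$ has the shape $\frac{H}{P\mv{a}P}$, every transition of a closed instance $P\sigma$ is a self-loop $P\sigma\mv{a}P\sigma$; by the supporting part of the Ruloid Theorem (Theorem~\ref{ruloid-thm}) every transition of $P\sigma$ is supported by such a ruloid, so the only state reachable from $P\sigma$ is $P\sigma$ itself, carrying a self-loop for each $a\in\init{P\sigma}$. Two such single-state systems are bisimilar exactly when their loop labels agree, whence $P\sigma\bisim Q\sigma$ iff $\init{P\sigma}=\init{Q\sigma}$, and therefore $\Algbis{G}\models P=Q$ is equivalent to $\init{P\sigma}=\init{Q\sigma}$ for every closed $\sigma$. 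I would then record the bridge to the transition logic: a ruloid $\frac{H}{P\mv{a}P}$ fires under $\sigma$ iff $\rightarrow_G,\sigma\sat\hyps{H}$, so that $a\in\init{P\sigma}$ iff $\sigma\sat\hyps{\rho}$ for some $a$-labelled ruloid $\rho$ of $P$. Consequently the equality of initial sets translates into the semantic equivalence $\bigvee_{\rho\in R_{P,a}}\hyps{\rho}\equiv\bigvee_{\rho'\in R_{Q,a}}\hyps{\rho'}$ in the logic, for each action $a$, where $R_{P,a}$ denotes the set of $a$-labelled ruloids of $P$.

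With this in hand I would exhibit the candidate relation $\approx$, the symmetric closure of $\{(P,Q)\}$, and verify the clauses of Definition~\ref{open-bis-defn}. Given a ruloid $\rho=\frac{H}{P\mv{a}P}$ of $P$, I take $J$ to be the set of all $a$-labelled ruloids of $Q$, chosen as valid ruloids for $Q$ (Definition~\ref{valid-ruloids}) whose target variables, using the freedom in Theorem~\ref{ruloid-thm}(2) together with the renaming allowed for valid ruloids, are kept disjoint from $\vars{P}\cup\vars{Q}\cup\targetvars{\rho}$. Each $\rho'\in J$ then has action $a$ (condition~\ref{obisim-1-1}) and target $Q$, so its target is related to the target $P$ of $\rho$ by $\approx$ (condition~\ref{obisim-1-2}); the disjointness condition~\ref{obisim-1-3} holds by the choice of fresh target variables, and condition~\ref{obisim-1-4} is vacuous because $\targetvars{\rho}\cap\targetvars{\rho'}=\emptyset$. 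Finally condition~\ref{obisim-2}, $\models_G\hyps{\rho}\imp\hyps{J}$, follows from the equivalence above: $\hyps{\rho}=\hyps{H}$ entails $\bigvee_{R_{P,a}}\hyps{\cdot}\equiv\bigvee_{R_{Q,a}}\hyps{\cdot}=\hyps{J}$, noting that target renaming leaves the initial transition formulae unchanged since these mention only source variables. The verification for the symmetric pairs $(Q,P)$ is identical after swapping the roles of $P$ and $Q$, using $\init{Q\sigma}=\init{P\sigma}$.

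The step I expect to be most delicate is the correspondence between the semantic condition ``$\init{P\sigma}=\init{Q\sigma}$ for all $\sigma$'' and the logical entailment required by condition~\ref{obisim-2}: this rests on both directions of ``$a\in\init{P\sigma}$ iff some $a$-labelled ruloid fires under $\sigma$ iff $\sigma\sat\hyps{H}$'', where the subtlety is that satisfiability of the full positive-and-negative antecedent $H$ by an extension of $\sigma$ to the target variables coincides with satisfaction of the initial transition formula $\hyps{H}$ by $\sigma$ alone. The bookkeeping of target-variable freshness needed for conditions~\ref{obisim-1-3} and~\ref{obisim-1-4} is routine given Theorem~\ref{ruloid-thm}(2) and the definition of valid ruloids, but it must be stated explicitly so that the chosen set $J$ really consists of valid ruloids for $Q$.
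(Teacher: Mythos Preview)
Your argument is essentially correct and is the natural proof; the paper does not include a proof of this theorem, so there is nothing to compare against at the level of strategy. The reduction of persistent bisimilarity to equality of initial action sets, the translation of that into a semantic equivalence of the disjunctions $\bigvee_{R_{P,a}}\hyps{\cdot}$ and $\bigvee_{R_{Q,a}}\hyps{\cdot}$, and the choice of $J$ as all $a$-labelled valid ruloids of $Q$ are all sound.

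There is one small bookkeeping point you should tighten. Condition~(1c) requires $(\targetvars{\rho'}\cup\targetvars{\rho})\cap(\sourcevars{\rho}\cup\sourcevars{\rho'})=\emptyset$, which in particular demands $\targetvars{\rho}\cap\vars{Q}=\emptyset$. Your freshness clause only constrains $\targetvars{\rho'}$, not $\targetvars{\rho}$; since $\rho$ is drawn from the fixed ruloid set $\ruloids{P}$, you cannot rename its target variables after the fact. The fix is immediate: when you set up the proof, invoke Theorem~\ref{ruloid-thm}(2) with $X\supseteq\vars{P}\cup\vars{Q}$ when forming \emph{both} $\ruloids{P}$ and $\ruloids{Q}$, so that every $\rho\in\ruloids{P}$ already satisfies $\targetvars{\rho}\cap\vars{Q}=\emptyset$ (and symmetrically). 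With that said up front, your verification of (1c) and (1d) goes through exactly as written.
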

We now proceed to introduce another class of operations for which
rule-matching bisimilarity yields a complete proof method. 

\begin{definition}[Non-inheriting Rule]\label{Def:non-inherit}
A GSOS rule of the form (\ref{general-gsos-rule}) is {\em
non-inheriting} if none of the variables in $\xvec$, namely the
source variables in the rule, occurs in the target of the conclusion
of the rule $C[\vec{x}, \vec{y}]$. A GSOS language is {\em
non-inheriting} if so is each of its rules.  Non-inheriting de Simone
rules and languages are defined similarly.
\end{definition}

\begin{theorem}
\label{thm:gsos-noninheriting}
Let $G$ be a non-inheriting GSOS language that, for each
$P\in\terms{\Sigma_G}$ and $c\in\Act$, contains at most one ruloid for
$P$ having $c\in\Act$ as action. Let $G'$ be the disjoint extension of
$G$ obtained by adding to $G$ the operations and rules of the language
BCCSP~\cite{vG2001,Mi89} with $\Act$ as set of actions.  Let $P$ and
$Q$ be terms over $\Sigma_G$. Then 
$\Algbis{G'} \models P = Q$ implies $P \obisim_{G'} Q$. 
\end{theorem}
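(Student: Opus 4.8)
The plan is to show that the relation of semantic equality is itself a rule-matching bisimulation in $G'$. Concretely, set
\[
\approx \;=\; \set{(P,Q) \mid P,Q\in\terms{\Sigma_G},\ \Algbis{G'}\models P=Q},
\]
which is symmetric, and verify that it satisfies the matching clause of Definition~\ref{open-bis-defn}. Two features of $G'$ drive the argument. First, since $G'$ is a disjoint extension of $G$, the ruloids of a $\Sigma_G$-term in $G'$ are exactly its ruloids in $G$; hence they remain non-inheriting and there is still at most one per action, so every target is a context built solely from target variables and we never leave $\terms{\Sigma_G}$. Second, the added BCCSP operations supply rich test processes: for every $A\subseteq\Act$ and every finite family of prescribed children there is a closed term realizing them, so $\init{\closed{\Sigma_{G'}}}=2^{\Act}$ and source variables can be instantiated by prefix terms at will.

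First I would record a firing characterization: if $\rho=\proofrule{H}{P\mv{a}P'}$ is the unique ruloid of $P$ for action $a$, then for every closed $\sigma$ one has $P\sigma\mv{a}q$ for some $q$ iff $\sigma\models_{G'}\hyps{H}$; soundness gives one direction, the supporting property together with uniqueness gives the other, and the absence of junk makes $\hyps{H}$ satisfiable. Now take $P\approx Q$ and such a $\rho$. Choosing $\sigma\models_{G'}\hyps{H}$ we get $P\sigma\mv{a}q$, and $P\sigma\bisim_{G'}Q\sigma$ forces $Q\sigma\mv{a}q'$, so by uniqueness $Q$ has exactly one $a$-ruloid $\rho'=\proofrule{H'}{Q\mv{a}Q'}$; put $J=\set{\rho'}$. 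Since bisimilar processes have the same initial actions, running this for every closed $\sigma$ shows that $\sigma\models_{G'}\hyps{H}$ iff $\sigma\models_{G'}\hyps{H'}$, which is by definition $\sat_{G'}\hyps{H}\Iff\hyps{H'}$; this yields condition~(\ref{obisim-2}) and, read the other way, the clause needed for the symmetric game.

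The heart of the proof is clause~(\ref{obisim-1-2}), namely $P'\approx Q'$. Using the notion of valid ruloid I would first rename the target variables of $\rho'$ to meet conditions~(\ref{obisim-1-3}) and~(\ref{obisim-1-4}): identify a target variable of $\rho'$ with that of $\rho$ exactly when both arise from a common antecedent $x\mv{b}y$ with $x\in\sourcevars{\rho}\intersect\sourcevars{\rho'}$, and send all remaining target variables to fresh names disjoint from the source variables of both ruloids; this map is injective because each target variable stems from a unique antecedent. To prove $\Algbis{G'}\models P'=Q'$, fix an arbitrary closed $\tau$ on the (now shared) target variables and realize it by a source substitution $\sigma$: for each source variable $x$ and action $b$ with a positive antecedent $x\mv{b}y$, let $\sigma(x)$ offer the BCCSP summand $b.\tau(y)$, omitting every action forbidden by a negative antecedent. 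Firing $\rho$ under $\sigma$ then produces $P\sigma\mv{a}P'\tau$, firing $\rho'$ produces $Q\sigma\mv{a}Q'\tau$, and from $P\sigma\bisim_{G'}Q\sigma$ one extracts $P'\tau\bisim_{G'}Q'\tau$, so that $P'\approx Q'$.

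The main obstacle lies in this last step. When a source variable carries two positive antecedents with the same action but distinct target variables, $\sigma(x)$ is forced to have several children for that action, so $P\sigma$ and $Q\sigma$ acquire several $a$-transitions and the bisimulation need not match them `diagonally'. I would resolve this by choosing the realizing children $\tau(y)$ to be pairwise non-bisimilar with observably distinct identities, which BCCSP can furnish (using depth even when $\abs{\Act}=1$): when two of them differ the match is forced to align them, and when they happen to be bisimilar the alignment is irrelevant by congruence. Making this alignment precise---showing that bisimilarity of the instantiated targets $P'\tau$ and $Q'\tau$ tracks the intended substitution rather than a permutation of it---is the delicate point, and is exactly where the non-inheriting hypothesis (targets depend only on the controllable children) and the at-most-one-ruloid hypothesis (no stray $a$-transitions to confuse the match) are both used. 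Finally I would assemble the pieces: clauses~(\ref{obisim-1-1}) through~(\ref{obisim-2}) all hold for $J=\set{\rho'}$, and by symmetry the same holds with $P$ and $Q$ interchanged, so $\approx$ is a rule-matching bisimulation and $\Algbis{G'}\models P=Q$ indeed implies $P\obisim_{G'}Q$.
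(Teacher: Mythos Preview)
The paper does not include a proof of this theorem (it is stated without argument in this workshop version), so there is no reference proof to compare against. Your overall strategy—showing that semantic equality over $\Sigma_G$-terms is itself a rule-matching bisimulation in $G'$—is the natural one and almost certainly what the authors have in mind; the firing characterization, the use of BCCSP to realise arbitrary initial-action sets, and the derivation of $\hyps{H}\Leftrightarrow\hyps{H'}$ from bisimilarity of all closed instances are all correct.

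The gap is precisely the point you flag as ``delicate'', and your sketch does not close it. When a ruloid has two positive antecedents $x\mv{b}y_1$ and $x\mv{b}y_2$ on the same source variable with the same action, your renaming rule (``identify when both arise from a common antecedent $x\mv{b}y$'') is ambiguous: it does not tell you whether $z_1\mapsto y_1,\,z_2\mapsto y_2$ or $z_1\mapsto y_2,\,z_2\mapsto y_1$, and the choice matters. For instance, with $f$ given by $\frac{x\mv{a}y_1,\,x\mv{a}y_2}{f(x)\mv{c}g(y_1,y_2)}$ and $f'$ by $\frac{x\mv{a}z_1,\,x\mv{a}z_2}{f'(x)\mv{c}g(z_2,z_1)}$, one has $\Algbis{G'}\models f(x)=f'(x)$, yet only the ``crossed'' renaming yields $P'\approx Q'\alpha$; the other does not when $g$ distinguishes its arguments.

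More seriously, your argument for $P'\approx Q'$ quantifies over an \emph{arbitrary} closed $\tau$ and then proposes to ``choose the realizing children $\tau(y)$ to be pairwise non-bisimilar''—but $\tau$ is given, not chosen. Even when $\tau(y_1)\not\bisim\tau(y_2)$, the realising $\sigma(x)=b.\tau(y_1)+b.\tau(y_2)$ gives $P\sigma$ several $a$-successors, and bisimilarity with $Q\sigma$ only guarantees that $P'\tau$ is bisimilar to \emph{some} $Q'\tau'$, with $\tau'$ possibly permuting the children; nothing forces $\tau'=\tau$. A correct argument must separate two phases: first use a single carefully crafted test substitution with pairwise distinguishable ``probe'' children to \emph{determine} the right renaming $\alpha$ (reading it off from which probes land where in the matched successor), and only then verify, for that fixed $\alpha$, that $P'\tau\bisim (Q'\alpha)\tau$ for all $\tau$—which now goes through because $\sigma$ can be built deterministically once the correspondence is fixed. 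You have the right intuition about probes, but you have conflated the discovery of $\alpha$ with the universal verification step.
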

A minor modification of the proof for the above result
yields a partial completeness result for a class of de
Simone systems.
\begin{theorem}
\label{thm:desimone-noninheriting}
Let $G$ be a non-inheriting de Simone language that, for each
$f\in\Sigma_G$ and $c\in\Act$, contains at most one rule having
$f\in\Sigma_G$ as principal operation and $c\in\Act$ as action. Let
$G'$ be the disjoint extension of $G$ obtained by adding to $G$ the
constant $\nil$ and the $\Act$-labelled prefixing operations from the
language BCCSP~\cite{vG2001,Mi89}.  Let $P$ and $Q$ be terms over
$\Sigma_G$. Then $\Algbis{G'} \models P = Q$ implies $P \obisim_{G'}
Q$.
\end{theorem}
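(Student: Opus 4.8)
The plan is to replay the proof of \thmref{thm:gsos-noninheriting} almost verbatim, the only change being that the closed terms used as witnesses are now built solely from the constant $\nil$ and the prefixing operations of BCCSP, \emph{without} the choice operator. This restriction is exactly what the de Simone format buys us: since each de Simone rule has only positive premises and tests each argument at most once, the associated ruloids inherit the same shape (they are positive and argument-linear), so a single prefix per tested argument suffices to realise their hypotheses and choice is never needed. As before, I would show that the relation $\approx \deq \set{(P,Q)\in\terms{\Sigma_G}\times\terms{\Sigma_G} \mid \Algbis{G'}\models P=Q}$ is a rule-matching bisimulation over $G'$; soundness (\thmref{soundness}) gives the reverse inclusion, and together these yield the stated completeness.

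Two structural facts about $G$ are used throughout and should be recorded first. Since $G$ is non-inheriting, every ruloid in $\ruloids{P}$ is non-inheriting as well (an easy induction on the derivation of ruloids, the source variables being discarded at each composition step), so the target of a ruloid mentions only target variables. Moreover, the hypothesis that each pair $(f,c)$ admits at most one de Simone rule lifts, by determinism of ruloid composition (argument-linearity forces a unique sub-derivation per tested argument), to the statement that each $\Sigma_G$-term $P$ has at most one ruloid per action. Fixing $P\approx Q$ and a ruloid $\rho=\proofrule{H}{P\mv{a}P'}$ in $\ruloids{P}$, I take $J=\set{\rho'}$ for the unique $a$-ruloid $\rho'=\proofrule{H'}{Q\mv{a}Q'}$ of $Q$; this $J$ is nonempty because any $\sigma\models\hyps{\rho}$ gives $P\sigma\mv{a}$, whence $Q\sigma\mv{a}$ by $P\sigma\bisim Q\sigma$, supported necessarily by $\rho'$. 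Condition \ref{obisim-2} will follow once the antecedents of $\rho$ and $\rho'$ are shown to coincide (below), as then $\hyps{\rho}$ and $\hyps{\rho'}$ are literally the same formula. Conditions \ref{obisim-1-1}, \ref{obisim-1-3} and \ref{obisim-1-4} are met by passing to a valid ruloid for $Q$: rename the target variables of $\rho'$ so that a premise $x\mv{b}y'$ coinciding with a premise of $\rho$ receives the shared name $y$ (these are the only shared target variables, and they arise from a common antecedent $x\mv{b}y\in H\cap H'$, as \ref{obisim-1-4} requires), all remaining target variables being chosen fresh and disjoint from the source variables.

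The crux is condition \ref{obisim-1-2}, namely $\Algbis{G'}\models P'=Q'$, and this is where prefixing is exploited. Given an arbitrary closed valuation $\tau$ of the target variables, define a closed substitution $\hat\sigma$ on source variables by $\hat\sigma(x)=b.\,\tau(y)$ whenever $x\mv{b}y$ is the (unique) antecedent on $x$ in $H$, and $\hat\sigma(x)=\nil$ for variables untested by $\rho$. Because each argument is tested at most once and there are no negative premises, $\hat\sigma(x)$ has $\tau(y)$ as its \emph{only} $b$-derivative, so the unique $a$-ruloid of $P$ fires under $\hat\sigma$ to the single transition $P\hat\sigma\mv{a}P'\tau$. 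By $P\hat\sigma\bisim Q\hat\sigma$, the term $Q\hat\sigma$ must answer with its own unique $a$-transition. This already pins down the promised alignment: if $\rho'$ tested some variable not in $H$, or tested a shared variable with a different action, then $\hat\sigma$ (carrying $\nil$ or the wrong prefix on that argument) would prevent $\rho'$ from firing, giving $Q\hat\sigma\nv{a}$ by uniqueness while $P\hat\sigma\mv{a}$, contradicting bisimilarity; the symmetric argument forces $H$ and $H'$ to test exactly the same variables with the same actions. Hence the matched transition is $Q\hat\sigma\mv{a}Q'\tau$, and from $P\hat\sigma\bisim Q\hat\sigma$ together with uniqueness of the $a$-transitions on both sides we read off $P'\tau\bisim Q'\tau$. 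As $\tau$ was arbitrary, $\Algbis{G'}\models P'=Q'$, i.e. $P'\approx Q'$.

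The main obstacle is precisely this last step: forcing the two $a$-transitions to be genuinely unique so that the bisimulation is compelled to match their targets. That is what the interplay of the three hypotheses secures---non-inheritance discards source variables from targets, the single-rule-per-action condition makes the matching $a$-ruloid unique, and prefixing produces arguments with a single derivative per action---and it is also the place where one must verify that the weaker toolkit ($\nil$ and prefixing, with no choice) still suffices. It does, exactly because de Simone premises are positive and argument-linear, so that one prefix per tested argument realises each hypothesis without ever needing to superimpose alternatives.
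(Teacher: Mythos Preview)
Your proposal is correct and follows essentially the same approach the paper intends: the paper itself says only that ``a minor modification of the proof for the above result'' (\thmref{thm:gsos-noninheriting}) yields \thmref{thm:desimone-noninheriting}, and you have spelled out precisely those modifications---lifting the one-rule-per-$(f,c)$ hypothesis to one-ruloid-per-$(P,c)$ via argument-linearity of de Simone composition, and replacing the BCCSP witnesses by $\nil$ and single prefixes, which suffices because de Simone premises are positive and test each argument at most once. Your argument that $H$ and $H'$ must coincide (else the $\nil$/wrong-prefix instantiation blocks the unique $a$-ruloid on the $Q$ side) is the right way to handle conditions~\ref{obisim-1-4} and~\ref{obisim-2} simultaneously.
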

For instance, the above theorem yields that rule-matching bisimilarity
can prove all the sound equations between terms constructed using
variables and the operations of restriction and injective relabelling
from CCS~\cite{Mi89} and synchronous parallel composition from
CSP~\cite{Ho85}.

\section{Related and Future Work}\label{Sect:related-work}

The development of general methods for proving equivalences between
open terms in expressive process calculi is a challenging subject that
has received some attention since the early developments of the
algebraic theory of processes---see, \eg, the
references~\cite{BruniFMM00,LarsenX1991,Rensink00,dS85,Weerdenburg2008}
for some of the work in this area.  De Simone's
FH-bisimilarity~\cite{dS85} represents an early meaningful step
towards a general account of the problem, presenting for the first time a
sound bisimulation method in place of the usual definition which
involves the closure under all possible substitutions. Our method
relies mainly on the concepts underlying FH-bisimilarity and it is a
refinement of that notion in the more expressive setting of GSOS
languages. (See de Simone's `Clock Example' discussed on
page~\pageref{clock}, where FH-bisimilarity fails while $\obisim$
succeeds.)

Later Rensink addressed the problem of checking bisimilarity of open
terms in~\cite{Rensink00}, where he presented a natural sharpening of
de Simone's FH-bisimilarity.  His extension of FH-bisimilarity is
orthogonal to ours and provides another method to check equivalences
between open terms that is more powerful than the original
FH-bisimilarity. Rensink defined a new notion of
bisimulation equivalence, called {\em hypothesis preserving bisimilarity}, that
adds to FH-bisimilarity the capability to store some kind of
information about the variable transitions during the
computation.  

To explain the import of hypothesis preserving bisimilarity we can
look at Example~\ref{Ex:alldisjointexts}. We note that $\obisim$ fails
to establish the sound equation $h(x)=i(x)$ because at the second step
of the computation some knowledge about the transitions of the closed
term $p$ substituted for $x$ is already established (indeed, at that
point we know that $p$ performs a $b$-transition, since this has been
tested at the first step). Nevertheless, when comparing $f(x)$ and
$g(x)$, rule-matching bisimulation behaves in memoryless fashion and
ignores this information.  Rensink's hypothesis preserving
bisimilarity takes into account the history and this is enough to
overcome the difficulties in that example and analogous scenarios.
Adding this feature to $\obisim$ would lead to a more powerful
rule-matching equivalence; we leave this further sharpening for future
work together with extensions of $\obisim$ to more expressive rule
formats.

Recently, van Weerdenburg addressed the automation of soundness proofs
in~\cite{Weerdenburg2008}. His approach differs from the one
in~\cite{Rensink00,dS85} and ours since he translates the operational
semantics into a logical framework. In such a framework, rules are
encoded as logical formulae and the overall semantics turns out to be
a logical theory, for which van Weerdenburg provides a sequent
calculus style proof system.  In the aforementioned paper, he offers
some examples of equivalences from the literature that can be proved
using his method in order to highlight its applicability. However,
even though the ultimate aim of the research described
in~\cite{Weerdenburg2008} is the automation of soundness proofs, van
Weerdenburg's system presents some drawbacks.  The main point is that
the user is not only required to provide the operational semantics and
the equation to check (together with the standard encoding of
bisimilarity), but he must also provide a candidate bisimulation
relation that can be used to show the validity of the equation under
consideration together with all the axioms that are needed to complete
the proof. The user is supposed thus to have a clear understanding of
what the proof is going to look like. This seems to be a general and
inescapable drawback when approaching the problem of checking
equations through a translation into a logical system. 

Despite the aforementioned slight drawback, the approach proposed by
van Weerdenburg is, however, very interesting and complements the
proposals that are based on the ideas underlying de Simone's
FH-bisimilarity, including ours. We believe that an adequate solution
to the problem of automating checks for the validity of equations in
process calculi will be based on a combination of bisimulation-based
and logical approaches.

A related line of work is the one pursued in, \eg, the
papers~\cite{AcetoBIMR2009,CranenMR08,MousaviIPL05}. Those papers
present rule formats that guarantee the soundness of certain algebraic
laws over a process language `by design', provided that the SOS rules
giving the semantics of certain operators fit that format. This is an
orthogonal line of investigation to the one reported in this
article. As a test case for the applicability of our rule-based
bisimilarity, we have checked that the soundness of all the equations
guaranteed to hold by the commutativity format
from~\cite{MousaviIPL05} can be shown using $\obisim$. We are carrying
out similar investigations for the rule formats proposed
in~\cite{AcetoBIMR2009,CranenMR08}.

Another avenue for future research we are actively pursuing is the
search for more, and more general, examples of partial completeness
results for rule-matching bisimulation over GSOS and de Simone
languages. Indeed, the partial completeness results we present in
Section~\ref{Sect:partial-completeness} are just preliminary steps
that leave substantial room for improvement. Last, but not least, we
are about to start working on an implementation of a prototype checker
for rule-matching bisimilarity.

\end{document}